\documentclass{article}
\usepackage[a4paper,margin=2cm,footskip=.5cm]{geometry}
\usepackage{setspace}
\onehalfspacing
\usepackage{graphicx, epstopdf, parskip, latexsym, amsmath, amssymb, mathtools, times, amsthm, url, multicol, rotating, natbib, algorithm, algpseudocode, bm}
\usepackage[english]{babel}
\usepackage[utf8]{inputenc}
\usepackage{latexsym}
\usepackage{fancyhdr}
\usepackage{color}
\usepackage{hyperref}


\theoremstyle{plain}
\newtheorem{thm}{Theorem}

\newtheorem{Cor}{Corollary}

\theoremstyle{definition}
\newtheorem{Def}{Definition}
\newtheorem{Assum}{Assumption}

\theoremstyle{remark}
\newtheorem{Rem}{Remark}
\newtheorem{Eg}{Example}

\DeclareMathOperator{\Var}{Var}
\DeclareMathOperator{\Cov}{Cov}
\DeclareMathOperator{\Corr}{Corr}

\DeclareMathOperator{\Leb}{Leb}
\DeclareMathOperator{\argmin}{argmin}
\allowdisplaybreaks

\pagestyle{fancy}
\fancyhf{}
\rhead{Mixed spatio-temporal OU processes \hspace{2mm} \thepage}
\lhead{M. Nguyen and A. E. D. Veraart}

\title{Bridging between short-range and long-range dependence \\with mixed spatio-temporal Ornstein-Uhlenbeck processes}
\author{MICHELE NGUYEN AND ALMUT E. D. VERAART\\
	\textit{Department of Mathematics, Imperial College London}
	}
\providecommand{\keywords}[1]{\textbf{\textit{Keywords:}} #1}
\providecommand{\subjclass}[1]{\textbf{\textit{Mathematics Subject Classification:}} #1}

\date{}

\begin{document}

\maketitle
\pagenumbering{arabic}

\thispagestyle{fancy}

\begin{abstract}
While short-range dependence is widely assumed in the literature for its simplicity, long-range dependence is a feature that has been observed in data from finance, hydrology, geophysics and economics. In this paper, we extend a L\'evy-driven spatio-temporal Ornstein-Uhlenbeck process by randomly varying its rate parameter to model both short-range and long-range dependence. This particular set-up allows for non-separable spatio-temporal correlations which are desirable for real applications, as well as flexible spatial covariances which arise from the shapes of influence regions. Theoretical properties such as spatio-temporal stationarity and second-order moments are established. An isotropic $g$-class is also used to illustrate how the memory of the process is related to the probability distribution of the rate parameter. We develop a simulation algorithm for the compound Poisson case which can be used to approximate other L\'evy bases. The generalised method of moments is used for inference and simulation experiments are conducted with a view towards asymptotic properties.
\end{abstract}

\keywords{Long range dependence, Ornstein-Uhlenbeck process, spatio-temporal, compound Poisson, generalised method of moments.} \\
\subjclass{60G10, 60G55, 60G60, 62F10, 62F12, 62M30}

\section{Introduction}

L\'evy-driven Ornstein-Uhlenbeck (OU) processes are popular modelling tools in finance due to their mean-reverting properties and ability to exhibit non-Gaussianity. To encompass the long memory that has been observed in time series of financial volatility, an extension towards randomly-varying rate parameters was introduced in \cite{Barndorff2001}. This results in a superposition of OU processes, i.e.~a supOU process, which can be seen as another materialisation of the idea in data traffic modelling and hydrology that long memory arises from a hierarchy or aggregation of processes \cite[]{DOT2002}. The supOU process itself has been studied extensively in its univariate and multivariate contexts as well as in its extremal properties \cite[]{BS2011, FK2007}.   
\\
OU processes have also been used in the spatio-temporal setting. While \cite{TLB2004} studied a product of a one-dimensional spatial OU process with a temporal OU process, \cite{BD2001} considered a multivariate OU process whose vector components correspond to different spatial locations. In the latter, the authors were motivated by environmental epidemiology and used the OU process as the stochastic intensity process of a log-Gaussian Cox process. To model turbulence, \cite{BS2003} defined a class of spatio-temporal OU (STOU) processes as stochastic integrals with L\'evy noise. This can be seen as a direct spatio-temporal extension of the L\'evy-driven OU processes used in finance. We call a random field $\{Y_{t}(\mathbf{x})\}$ in space-time $\mathcal{X}\times\mathcal{T} = \mathbb{R}^{d}\times \mathbb{R}$ for some $d\in\mathbb{N}$ a \textit{STOU process} if:

\begin{equation*}
Y_{t}(\mathbf{x}) = \int_{A_{t}(\mathbf{x})} \exp(-\lambda(t-s)) L(\mathrm{d}\boldsymbol{\xi}, \mathrm{d}s),
\end{equation*}

where $\lambda>0$ and $L$ is a homogeneous L\'evy basis with finite second moments. The integration set or ambit set, $A_{t}(\mathbf{s})\subset  \mathcal{X} \times\mathcal{T}$, can be interpreted as a causality cone in physics and satisfies the following conditions: 

\begin{equation}
\begin{cases}
A_{t}(\mathbf{x}) = A_{0}(\mathbf{0}) + (\mathbf{x}, t), &\text{ (Translation invariant)}
\\
A_{s}(\mathbf{x}) \subset A_{t}(\mathbf{x}), \forall s < t,
\\
A_{t}(\mathbf{x}) \cap (\mathcal{X} \times (t, \infty)) = \emptyset. &\text{ (Non-anticipative)}
\end{cases}
\label{eqn:ambitassumptions}
\end{equation}

Further studies have shown that this class of processes exhibits exponential temporal correlation just like the temporal OU process and boosts flexible spatial correlation structures which are determined by the shape of the ambit set  \cite[]{NV2016}. In addition, non-separable covariances, which are desirable in practice, can be obtained. 
\\
In this paper, we extend the STOU processes by mixing the rate parameter $\lambda$. This will enable us to bridge between short-range and long-range dependence structures in space-time. A \textit{mixed spatio-temporal OU (MSTOU) process} is defined by:
\begin{equation}
Y_{t}(\mathbf{x}) = \int_{0}^{\infty}\int_{A_{t}(\mathbf{x})} \exp(-\lambda(t-s)) L(\mathrm{d}\boldsymbol{\xi}, \mathrm{d}s, \mathrm{d}\lambda).
\label{eqn:MSTOU}
\end{equation}
Now, $L$ is a L\'evy basis over the product space of space-time and the $\lambda$ parameter space. In addition, it is no longer homogeneous since we typically associate the parameter space with a probability distribution. Depending on the parameters of this distribution, the process has either short-range or long-range dependence. This extension of STOU processes will be useful for applications where long memory has been observed, for example, in hydrology, geophysics and economics \cite[]{FMAA2008, DOT2002}. 

\paragraph{Outline}
In the next section, we introduce the background required to understand the construction of (\ref{eqn:MSTOU}). In Section \ref{sec:Prop}, we derive the key theoretical properties of the MSTOU process. This includes spatio-temporal stationarity and second-order moments. Particular focus is given to the isotropic $g$-class and we show that long memory can be obtained for specific parameter ranges of the distribution of $\lambda$. By way of an example, we contrast the MSTOU process to another way of defining superpositions of STOU processes which is related to the well-known continuous autoregressive (CAR) process. Unlike the MSTOU process, this alternative definition does not model temporal long memory. In Section \ref{sec:Sim}, we look at the case where $L$ is compound Poisson and simulate from the MSTOU process. Unlike the discrete convolution algorithms for the STOU processes in \cite{NV2016}, we no longer have kernel discretisation error and only have ambit set approximation error that stems from the kernel truncation. The simulation method can also be used to give second-order approximations for other L\'evy bases. In Section \ref{sec:Infer}, we apply the generalised method of moments (GMM) to an MSTOU process. Simulation experiments are conducted to illustrate the finite sample behaviour as well as to provide a view towards to the asymptotic properties of these estimators. Finally, we conclude and discuss further directions for research in Section \ref{sec:Conclusion}. 

\section{Preliminaries} \label{sec:Prelim}

To understand the definition of an MSTOU process in (\ref{eqn:MSTOU}), we rely on the $\mathcal{L}_{0}$ integration theory in \cite{RR1989}. Let $S =  \mathbb{R}^{d}\times\mathbb{R} \times (0, \infty)$, the product space of space-time and the $\lambda$ parameter space. Further denote the Borel $\sigma$-algebra of $S$ by $\mathcal{S} = \mathcal{B}(\mathcal{S})$ and let $\mathcal{B}_{b}(S)$ contain all its Lebesgue-bounded sets. Then, a L\'evy basis is defined as follows \cite[]{BBV2012}: 
\\
\begin{Def}[L\'evy basis] \hfill \\
$L$ is a \textit{L\'evy basis} on $(S, \mathcal{S})$ if it is an independently scattered and infinitely divisible random measure. This means that:
\begin{enumerate}
\item $L = \{L(E): E \in \mathcal{B}_{b}(S)\}$ is a set of $\mathbb{R}$-valued random variables such that for a sequence of disjoint elements of $\mathcal{B}_{b}(S)$, $\{E_{i}: i \in \mathbb{N}\}$:
\begin{itemize}
\item $L(\bigcup_{j = 1}^{\infty} E_{j}) = \sum_{j = 1}^{\infty} L(E_j)$  almost surely when $\bigcup_{j = 1}^{\infty} E_{j} \in \mathcal{B}_{b}(S)$;
\item and for $i\neq j$, $L(E_{i})$ and $L(E_{j})$ are independent. 
\end{itemize}
\item Let $B_{1}, ..., B_{m} \in \mathcal{B}_{b}(S)$ for finite $m\in\mathbb{N}$. The random vector $\mathbf{L} = (L(B_{1}), ..., L(B_{m}))$ is infinitely divisible, i.e.~for any $n \in \mathbb{N}$, there exists a law $\mu_{n}$ such that the law of $\mathbf{L}$ can be expressed as $\mu = \mu_{n}^{*n}$, the n-fold convolution of $\mu_{n}$ with itself.
\end{enumerate}
\end{Def}
Since $L(E)$ corresponds to an infinitely divisible random variable for $E\in\mathcal{B}_{b}(S)$, it obeys a L\'evy-Khintchine (L-K) formula and its cumulant generating function can be written as:
\begin{equation}
C\{\theta \ddagger L(E)\} = \log\mathbb{E}\left[\exp\left(i\theta L(E)\right)\right] = i\theta a^{*}(E) - \frac{1}{2}\theta^{2}b^{*}(E) + \int_{\mathbb{R}}\left(e^{i\theta z } - 1 - i\theta z \mathbf{1}_{|z|\leq 1}\right) n(\mathrm{d}z, E), \label{eqn:LBasisLK}
\end{equation}
where $a^{*}$ is a signed measure on $\mathcal{B}_{b}(S)$, $b^{*}$ is a measure on $\mathcal{B}_{b}(S)$, and $\mathbf{1}_{|z|\leq1} = 1$ for $|z|\leq 1$ and $0$ otherwise. The generalised L\'evy measure $n(\cdot, \cdot)$ is such that for fixed $\mathrm{d}z$, $n(\mathrm{d}z, A)$ is a measure on $\mathrm{B}_{b}(S)$, while for fixed $A\in\mathcal{B}_{b}(S)$, $n(\mathrm{d}z, A)$ is a L\'evy measure, i.e.~it satisfies $\int_{\mathbb{R}}\min(1, z^2)n(\mathrm{d}z, A) < \infty$. Note that the logarithm used in (\ref{eqn:LBasisLK}) is the distinguished logarithm (see page 33 of \cite{Sato1999}). 
\\
In \cite{RR1989}, the authors relate a L\'evy basis to its control measure: 
\\
\begin{Def}[Control measure] \hfill \\
Let $L$ be a L\'evy basis satisfying the L-K formula in (\ref{eqn:LBasisLK}). We define the measure $\tilde{c}$ by:
\begin{equation*}
\tilde{c}(E) = |a^{*}|(E) + b^{*}(E) + \int_{\mathbb{R}}\min(1, z^2)n(\mathrm{d}z, E), 
\end{equation*}
where $E\in\mathcal{B}_{b}(S)$ and $|\cdot|$ denotes total variation. By further requiring $\tilde{c}$ to be $\sigma$-finite, we obtain the \textit{control measure} of $L$.  
\end{Def}
Similar to how an infinitely divisible random variable is characterised by its L-K characteristic triplet, this control measure helps us define the characteristic quadruplet of the L\'evy basis $L$:
\\
\begin{Def}[Characteristic quadruplet and the L\'evy seed]  \label{def:Lseed} \hfill \\
Let $L$ be a L\'evy basis satisfying the L-K formula in (\ref{eqn:LBasisLK}) and let $\tilde{c}$ be its control measure. Suppose that:
\begin{itemize}
\item the Radon-Nikodym derivatives $a(\mathbf{z}) = \frac{\mathrm{d} a^{*}}{\mathrm{d}\tilde{c}}(\mathbf{z})$ and $b(\mathbf{z}) = \frac{\mathrm{d} b^{*}}{\mathrm{d}\tilde{c}}(\mathbf{z})$ are functions on $S$ and $b$ is non-negative;
\item the Radon-Nikodym derivative $\nu(\mathrm{d}z, \mathbf{z}) = \frac{n(\mathrm{d}z, \cdot )}{\mathrm{d}\tilde{c}}(\mathbf{z})$ is a L\'evy measure on $\mathbb{R}$ for fixed $\mathbf{z}$ as well as a measurable function on $S$ for fixed $\mathrm{d}z$. 
\end{itemize} 
Then for $E\in\mathcal{B}_{b}(S)$, $\int_{E}a(\mathbf{z})\tilde{c}(\mathrm{d}\mathbf{z}) = a^{*}(E)$, $\int_{E}b(\mathbf{z})\tilde{c}(\mathrm{d}\mathbf{z}) = b^{*}(E)$ and $\int_{E}\nu(\mathrm{d}z, \mathbf{z})\tilde{c}(\mathrm{d}\mathbf{z}) = n(\mathrm{d}z, E)$. The \textit{characteristic quadruplet (CQ)} of $L$ is given by $(a, b, \nu(\mathrm{d}z, \cdot), \tilde{c}) = (a(\mathbf{z}), b(\mathbf{z}), \nu(\mathrm{d}z, \mathbf{z}), \tilde{c}(\mathbf{z}))_{\mathbf{z}\in S}$. 
\\
The \textit{L\'evy seed} of $L$ is defined as the random variable $L'(\mathbf{z})$ with the L-K representation: 
\begin{equation*}
C\{\theta \ddagger L'(\mathbf{z})\} = i\theta a(\mathbf{z}) - \frac{1}{2}\theta^{2}b(\mathbf{z}) + \int_{\mathbb{R}}(e^{i\theta z } - 1 - i\theta z \mathbf{1}_{[-1, 1]}(z)) \nu(\mathrm{d}z, \mathbf{z}). 
\end{equation*}
\end{Def}
For MSTOU processes, we set the CQ of $L$ such that $a$ and $b$ are constants, and $\nu(\mathrm{d}z, \cdot) = \nu(\mathrm{d}z)$. This means that the L\'evy seed $L'$ does not depend on $\mathbf{z}\in S$. To allow for different values of $\lambda$ in a possibly continuous way over space-time, we set our control measure to be $\tilde{c}(\mathrm{d}\mathbf{z}) = \mathrm{d}\bm{\xi}\mathrm{d}s \pi(\mathrm{d}\lambda)$ where $\bm{\xi}\in\mathbb{R}^{d}$, $s\in\mathbb{R}$, $\lambda\in(0, \infty)$ and $\int_{0}^{\infty}\pi(\mathrm{d}\lambda) = 1$. This means that the L\'evy basis is homogeneous, i.e.~has stationary distributions, over space-time, but is inhomogeneous over the $\lambda$ parameter space in a manner determined by $\pi$. The latter can be interpreted as the probability measure of the parameter $\lambda$ and we typically assume that it has a density, $f(\lambda)$. This means that we work under the following assumptions:
\vspace{2mm}
\begin{Assum}\label{assum:LCQ}
The CQ of the L\'evy basis in (\ref{eqn:MSTOU}) is $(a, b, \nu(\mathrm{d}z), \mathrm{d}\bm{\xi}\mathrm{d}s f(\lambda)\mathrm{d}\lambda)$ where $a\in\mathbb{R}$, $b\geq 0$, $\nu$ is a L\'evy measure and $f(\lambda)$ is a probability density.
\end{Assum}

Now that we have defined the required L\'evy basis, we summarise how stochastic integrals such as (\ref{eqn:MSTOU}) are constructed. Consider the probability space $(\Omega, \mathcal{F}, P)$. We start with the integral of a simple function in $S$ before taking the limit to measurable functions:
 \\
\begin{Def}[Stochastic integral of a simple function] \hfill \\
Let $\{E_{j}: j = 1,..., m\}$ be a collection of disjoint sets of $\mathcal{B}_{b}(S)$ and let $y_{j} \in \mathbb{R}$ for $j = 1, ..., m$. A simple function on $S$ is given by $g(\mathbf{x}, t, \lambda) = \sum_{j = 1}^{m} y_{j}\mathbf{1}_{E_{j}}(\mathbf{x}, t, \lambda)$. The \textit{stochastic integral of $g$ over $A \in \mathcal{S}$} is defined by $\int_{A}g(\bm{\xi}, s, \lambda) L(\mathrm{d}\bm{\xi},\mathrm{d}s, \mathrm{d}\lambda) = \sum_{j = 1}^{m} y_{j} L(A \cap E_{j})$.
\end{Def}
\vspace{2mm}
\begin{Def}[$L$-integrable functions and their stochastic integrals] \hfill \\
A measurable function $g:(S, \mathcal{S}) \rightarrow (\mathbb{R}, \mathcal{B}(\mathbb{R}))$ is said to be \textit{$L$-integrable} if there exists a sequence $\{g_{n}\}$ of simple functions such that:
\begin{itemize}
\item[(i)] $g_{n} \rightarrow g$ as $n \rightarrow \infty$, almost everywhere with respect to $\tilde{c}$.
\item[(ii)] for every $A \in \mathcal{S}$, the sequence $\{\int_{A}g_{n}(\bm{\xi}, s, \lambda)L(\mathrm{d}\bm{\xi}, \mathrm{d}s, \mathrm{d}\lambda)\}$ converges in probability. 
\end{itemize}
If $g$ is $L$-integrable, we write:
\begin{equation*}
\int_{A}g(\bm{\xi}, s, \lambda) L(\mathrm{d}\bm{\xi}, \mathrm{d}s, \mathrm{d}\lambda) = P-\lim_{n\rightarrow\infty} \int_{A} g_{n}(\bm{\xi}, s, \lambda) L(\mathrm{d}\bm{\xi}, \mathrm{d}s, \mathrm{d}\lambda).
\end{equation*}
The construction is well-defined as the limit does not depend on $\{g_{n}\}$. Here, ``$P-\lim$'' refers to the limit achieved through a convergence in probability.
\end{Def}
\vspace{2mm}
Later, we will use the following result from Theorem 2.7 of \cite{RR1989} to show that the MSTOU processes we construct are well-defined:
\\
\begin{thm}
\label{thm:intconditions}
Let $L$ be a L\'evy basis on $(S, \mathcal{S})$ whose CQ is $(a, b, \nu(\mathrm{d}z), \mathrm{d}\bm{\xi}\mathrm{d}s f(\lambda)\mathrm{d}\lambda)$. The measurable function $g:(S, \mathcal{S}) \rightarrow (\mathbb{R}, \mathcal{B}(\mathbb{R}))$ is $L$-integrable if and only if:
\begin{equation*}
\int_{S} |U(g(\bm{\xi}, s, \lambda))|f(\lambda)\mathrm{d}\bm{\xi}\mathrm{d}s\mathrm{d}\lambda < \infty, \int_{S} b|g(\bm{\xi}, s, \lambda)|^{2}f(\lambda) \mathrm{d}\bm{\xi}\mathrm{d}s \mathrm{d}\lambda< \infty, \text{ and } \int_{S} V_{0}(g(\bm{\xi}, s, \lambda)) f(\lambda)\mathrm{d}\bm{\xi}\mathrm{d}s \mathrm{d}\lambda<\infty,
\end{equation*}
where $U(u) = u a + \int_{\mathbb{R}} \big(\rho(zu) - u\rho(z)\big) \nu(\mathrm{d}z)$, $\rho(z) = z\mathbf{1}_{|z|\leq 1}$, and $V_{0}(u) = \int_{\mathbb{R}} \min(1, |zu|^{2})\nu(\mathrm{d}z)$.
\end{thm}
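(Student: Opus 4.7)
The plan is to obtain Theorem \ref{thm:intconditions} as a direct specialisation of Theorem 2.7 of Rajput and Rosinski, rather than reprove the general integration theory from scratch. The Rajput--Rosinski result characterises $L$-integrability for an arbitrary L\'evy basis in terms of three integrals with respect to the control measure $\tilde c$: one involving a drift-type functional $U(g(\mathbf{z}), \mathbf{z})$ built from $a(\mathbf{z})$ and $\nu(\mathrm{d}z, \mathbf{z})$, one involving the Gaussian variance $b(\mathbf{z}) |g(\mathbf{z})|^{2}$, and one involving a tail functional $V_{0}(g(\mathbf{z}), \mathbf{z})$ of $\nu(\mathrm{d}z, \mathbf{z})$. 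The content of the theorem as stated here is that, under Assumption \ref{assum:LCQ}, these three abstract integrals collapse into the three concrete integrals given.

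First, I would verify that Assumption \ref{assum:LCQ} places us in the setting of Rajput--Rosinski. In particular, the control measure $\tilde{c}(\mathrm{d}\mathbf{z}) = \mathrm{d}\bm{\xi}\,\mathrm{d}s\,f(\lambda)\mathrm{d}\lambda$ is $\sigma$-finite, since it is the product of Lebesgue measure on $\mathbb{R}^{d}\times\mathbb{R}$ with the finite probability measure $f(\lambda)\mathrm{d}\lambda$; and the functional data $a(\mathbf{z})$, $b(\mathbf{z})$, $\nu(\mathrm{d}z, \mathbf{z})$ are Radon--Nikodym densities of $a^{*}$, $b^{*}$, $n(\mathrm{d}z, \cdot)$ with respect to $\tilde c$ in the sense of Definition \ref{def:Lseed}.

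Second, I would substitute the constant choices $a(\mathbf{z}) \equiv a$, $b(\mathbf{z}) \equiv b$, $\nu(\mathrm{d}z, \mathbf{z}) \equiv \nu(\mathrm{d}z)$ into the Rajput--Rosinski integrand functionals. Because the truncation function $\rho(z) = z\mathbf{1}_{|z|\leq 1}$ does not depend on $\mathbf{z}$, the general drift functional reduces to
\begin{equation*}
U(g(\mathbf{z}), \mathbf{z}) = g(\mathbf{z})\, a + \int_{\mathbb{R}} \bigl(\rho(z g(\mathbf{z})) - g(\mathbf{z}) \rho(z)\bigr)\, \nu(\mathrm{d}z) = U(g(\mathbf{z})),
\end{equation*}
and the general tail functional reduces to $V_{0}(g(\mathbf{z}), \mathbf{z}) = V_{0}(g(\mathbf{z}))$, exactly as defined in the statement. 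Writing out $\tilde{c}$ in product form then yields the three integrals in the theorem.

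The only step that requires any care, and which I would flag as the main (minor) obstacle, is checking that the measurability and $\sigma$-finiteness hypotheses of Rajput--Rosinski are genuinely met in our factorised setup; once that is done, the characterisation transfers verbatim and no further estimates are needed. This is the price for invoking a powerful external result, but it is what makes the proof essentially a bookkeeping exercise rather than a new derivation.
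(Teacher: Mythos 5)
Your proposal takes essentially the same route as the paper: the paper does not reprove this result but cites it directly as Theorem 2.7 of Rajput and Rosinski (1989), specialised to the constant characteristics $a$, $b$, $\nu(\mathrm{d}z)$ and the product control measure $\mathrm{d}\bm{\xi}\,\mathrm{d}s\,f(\lambda)\mathrm{d}\lambda$ of Assumption \ref{assum:LCQ}, exactly as you do. The one detail you pass over silently — which the paper itself only flags in Remark 1 — is that the truncation function $\rho(z) = z\mathbf{1}_{|z|\leq 1}$ differs from the one used by Rajput and Rosinski, so the transfer is not quite ``verbatim'': the drift functional $U$ must be adjusted to the new truncation, though this does not affect the validity of the characterisation.
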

\vspace{2mm}
\begin{Rem}
Here, we are used a slightly different truncation function from the one used in \cite{RR1989}, i.e.~$\rho(z) = z$ if $|z|\leq 1$ and $\frac{z}{|z|}$ if $|z|>1$. 
\end{Rem}
\vspace{2mm}
Since we are interested in the second order moments of MSTOU processes, we need to make the following assumption:
\vspace{2mm}
\begin{Assum}\label{assum:finite2m}
The L\'evy basis in (\ref{eqn:MSTOU}) has finite second moments.
\end{Assum}
\vspace{2mm}
With this assumption, the integrability conditions simplify:
\vspace{2mm}
\begin{Cor}
\label{cor:intcon}
Under Assumptions \ref{assum:LCQ} and \ref{assum:finite2m}, the MSTOU process is well-defined if:
\begin{equation}
\int_{0}^{\infty} \int_{A_{t}(\mathbf{x})} \exp(-\lambda(t-s)) f(\lambda) \mathrm{d}\bm{\xi} \mathrm{d}s \mathrm{d}\lambda < \infty \text{ and } \int_{0}^{\infty} \int_{A_{t}(\mathbf{x})} \exp(-2\lambda(t-s)) f(\lambda) \mathrm{d}\bm{\xi} \mathrm{d}s \mathrm{d}\lambda < \infty. \label{eqn:simicon}
\end{equation}
\end{Cor}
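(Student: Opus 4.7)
The plan is to apply Theorem \ref{thm:intconditions} to the integrand $g(\bm{\xi},s,\lambda) = \exp(-\lambda(t-s)) \mathbf{1}_{A_t(\mathbf{x})}(\bm{\xi},s)$, and then show that under Assumption \ref{assum:finite2m} the three general integrability conditions of that theorem all collapse into the two conditions (\ref{eqn:simicon}). The key observation that will make the reduction clean is that $g$ is bounded: by the non-anticipative condition in (\ref{eqn:ambitassumptions}), we have $s \leq t$ on $A_t(\mathbf{x})$, so $0 \leq g \leq 1$ pointwise. Consequently $g^2 \leq g$ (useful for combining integrals) and any dominating bound of the form $C|g|$ or $Cg^2$ will follow directly from (\ref{eqn:simicon}).

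First I would handle the Gaussian condition $\int_S b|g|^2 f(\lambda)\,d\bm{\xi}\,ds\,d\lambda<\infty$. Since $b$ is constant under Assumption \ref{assum:LCQ}, this is exactly (up to the factor $b$) the second integral in (\ref{eqn:simicon}), so it is immediate. Next, for the jump condition involving $V_0$, I would use the trivial inequality $\min(1,|zg|^2)\leq z^2 g^2$ to bound
\[
V_0(g(\bm{\xi},s,\lambda)) \leq g(\bm{\xi},s,\lambda)^2 \int_{\mathbb R} z^{2}\,\nu(\mathrm dz).
\]
The integral $\int z^{2}\nu(\mathrm dz)$ is finite because Assumption \ref{assum:finite2m} (together with the constancy of $a,b,\nu$) is equivalent to $\int_{\mathbb R} z^2\nu(\mathrm dz)<\infty$. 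So this condition is controlled by the second integral of (\ref{eqn:simicon}).

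The slightly more delicate step is the first condition, bounding $|U(g)|$. I would exploit $|g|\leq 1$ to analyze $\rho(zg)-g\rho(z)$ pointwise in $z$: on $\{|z|\leq 1\}$, since $|zg|\leq|z|\leq 1$, both truncations equal the identity and the difference vanishes. On $\{|z|>1\}$, we have $\rho(z)=0$, while $|\rho(zg)|\leq |zg|\wedge 1 \leq |z||g|$. Combining gives
\[
|U(g)| \leq |a|\,|g| + |g|\int_{|z|>1}|z|\,\nu(\mathrm dz) = C'\,|g|,
\]
with $C'<\infty$ because finite second moments of the L\'evy seed imply $\int_{|z|>1}|z|\,\nu(\mathrm dz)<\infty$. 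Hence the first integrability condition reduces to finiteness of $\int_S |g| f(\lambda)\,d\bm{\xi}\,ds\,d\lambda$, i.e.\ the first integral in (\ref{eqn:simicon}).

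All three conditions of Theorem \ref{thm:intconditions} are therefore implied by the two conditions in (\ref{eqn:simicon}), so the stochastic integral in (\ref{eqn:MSTOU}) is well-defined. I expect the only mildly technical step to be the case analysis for $U(g)$ on $\{|z|\leq 1\}$ versus $\{|z|>1\}$; once $|g|\leq 1$ is used, everything else is a direct application of Assumption \ref{assum:finite2m} and the previously stated theorem.
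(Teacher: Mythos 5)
Your proposal is correct and follows essentially the same route as the paper: the paper's proof simply defers to the analogous existence argument for STOU processes in the supplementary material of \cite{NV2016}, which rests on exactly the Rajput--Rosi\'nski integrability criteria quoted as Theorem \ref{thm:intconditions}. Your reduction of the three conditions to (\ref{eqn:simicon}) via $0\leq g\leq 1$ on the ambit set and $\int_{\mathbb{R}} z^{2}\nu(\mathrm{d}z)<\infty$ under Assumption \ref{assum:finite2m} is precisely the intended argument, spelled out in detail.
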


\section{Properties} \label{sec:Prop}

In the following, assume that Corollary \ref{cor:intcon} holds. In this section, we investigate the theoretical properties of MSTOU processes. The proofs of the results can be found in the Appendix. 

\subsection{Finite-dimensional distribution and stationarity}

The distribution of an MSTOU process is determined by its ambit set $A_{t}(\mathbf{x})$ and the CQ of its L\'evy basis. A summary of this can be obtained through its generalised cumulant functional \cite[]{BBV2012, NV2016}:
\\
\begin{Def}[Generalised cumulant functional]\label{defn:cgf} \hfill \\
For a random field in space-time, $Y = \{Y_{t}(\mathbf{x})\}_{\mathbf{x}\in\mathbb{R}^{d}, t\in\mathbb{R}}$, let $v$ denote any non-random measure for which: 
\begin{equation*}
v(Y) = \int_{\mathbb{R}^{d}\times\mathbb{R}} Y_{t}(\mathbf{x}) v(\mathrm{d}\mathbf{x}, \mathrm{d}t),
\end{equation*}
exists almost surely. The \textit{generalised cumulant functional} (GCF) of $Y$ with respect to $v$ is defined as: $C\{\theta \ddagger v(Y) \} = \log\mathbb{E}\left[\exp\left(i\theta v\left(Y\right)\right)\right]$.  
\end{Def}
\vspace{2mm}
\begin{thm}
\label{thm:GCF}
Let $Y$ be an MSTOU process defined by (\ref{eqn:MSTOU}) and $A = A_{0}(\mathbf{0})$. Suppose that for all $\bm{\xi} \in \mathbb{R}^{d}$, $s \in \mathbb{R}$ and $\lambda \in (0, \infty)$,  
\begin{align*}
h_{A}(\bm{\xi},s, \lambda) &= \int_{\mathbb{R}^{d}\times\mathbb{R}} \mathbf{1}_{A}(\bm{\xi} - \mathbf{x}, s-t) \exp(-\lambda(t-s)) v(\mathrm{d}\mathbf{x}, \mathrm{d}t) < \infty,
\end{align*}
and $h_{A}(\bm{\xi}, s, \lambda)$ is integrable with respect to the L\'evy basis $L$. Then, the GCF of $Y$ with respect to $v$ can be written as:
\begin{equation}\label{eqn:MSTOULevy}
\begin{split}
C\{\theta \ddagger v(Y)\} &=  i\theta a  \int_{S}  h_{A}(\bm{\xi},s, \lambda)f(\lambda)\mathrm{d}\bm{\xi}\mathrm{d}s\mathrm{d}\lambda - \frac{1}{2} \theta^{2}b  \int_{S}  h^{2}_{A}(\bm{\xi},s, \lambda)f(\lambda)\mathrm{d}\bm{\xi}\mathrm{d}s\mathrm{d}\lambda \\
&+   \int_{S} \int_{\mathbb{R}} \left(\exp(i\theta h_{A}(\bm{\xi},s, \lambda)z) - 1 - i\theta h_{A}(\bm{\xi},s, \lambda)z\mathbf{1}_{|z|\leq 1}\right) \nu(\mathrm{d}z)  f(\lambda)\mathrm{d}\bm{\xi}\mathrm{d}s\mathrm{d}\lambda,
\end{split}
\end{equation}
where $(a, b, \nu(\mathrm{d}z), \mathrm{d}\bm{\xi}\mathrm{d}s f(\lambda)\mathrm{d}\lambda)$ is the CQ of $L$. 
\end{thm}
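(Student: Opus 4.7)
The plan is to reduce the computation of $C\{\theta \ddagger v(Y)\}$ to the characteristic function of a single stochastic integral against $L$, and then invoke the Rajput–Rosinski formula. First I would use the translation invariance of the ambit set to rewrite the indicator: since $A_t(\mathbf{x}) = A + (\mathbf{x},t)$ with $A = A_0(\mathbf{0})$, we have $\mathbf{1}_{A_t(\mathbf{x})}(\bm{\xi},s) = \mathbf{1}_A(\bm{\xi}-\mathbf{x}, s-t)$, so
\begin{equation*}
Y_t(\mathbf{x}) = \int_S \mathbf{1}_A(\bm{\xi}-\mathbf{x}, s-t)\exp(-\lambda(t-s)) L(\mathrm{d}\bm{\xi},\mathrm{d}s,\mathrm{d}\lambda).
\end{equation*}

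Next I would apply a stochastic Fubini theorem to interchange the deterministic $v$-integral and the $L$-integral:
\begin{equation*}
v(Y) = \int_{\mathbb{R}^d\times\mathbb{R}} \int_S \mathbf{1}_A(\bm{\xi}-\mathbf{x}, s-t)\exp(-\lambda(t-s)) L(\mathrm{d}\bm{\xi},\mathrm{d}s,\mathrm{d}\lambda) \, v(\mathrm{d}\mathbf{x},\mathrm{d}t) = \int_S h_A(\bm{\xi},s,\lambda) L(\mathrm{d}\bm{\xi},\mathrm{d}s,\mathrm{d}\lambda).
\end{equation*}
The hypotheses of the theorem — that $h_A$ is finite pointwise and is $L$-integrable — are precisely what one needs to validate this interchange and to give meaning to the resulting integral; together with the a.s.\ existence of $v(Y)$ in Definition \ref{defn:cgf} they allow one to push the identity first for simple approximants to $\mathbf{1}_A\exp(-\lambda(t-s))$ (where the interchange is trivial by linearity) and then pass to the limit in probability using Theorem \ref{thm:intconditions}.

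Once $v(Y)$ is represented as a single stochastic integral $\int_S h_A \, \mathrm{d}L$, I would apply the characteristic-function formula for integrals against a Lévy basis (Proposition 2.6 of Rajput–Rosinski, already used in the STOU literature cited in the paper), which states
\begin{equation*}
C\{\theta \ddagger \textstyle\int_S g\, \mathrm{d}L\} = \int_S C\{\theta g(\mathbf{z})\ddagger L'(\mathbf{z})\}\,\tilde c(\mathrm{d}\mathbf{z}).
\end{equation*}
Substituting $g = h_A$, $\tilde c(\mathrm{d}\mathbf{z}) = \mathrm{d}\bm{\xi}\,\mathrm{d}s\, f(\lambda)\mathrm{d}\lambda$ from Assumption \ref{assum:LCQ}, and inserting the Lévy–Khintchine form of $L'$ from Definition \ref{def:Lseed} directly produces the three terms displayed in \eqref{eqn:MSTOULevy}.

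The main obstacle is the stochastic Fubini step: one has to justify swapping an $L$-integral against a non-finite, inhomogeneous Lévy basis with a deterministic $v$-integral. I expect to handle it by first verifying the interchange for step functions $v$ supported on finitely many rectangles (where it reduces to the countable additivity of $L$ and linearity), checking that the cumulant integrals on the right-hand side of \eqref{eqn:MSTOULevy} converge under the $L$-integrability of $h_A$ via Theorem \ref{thm:intconditions}, and then extending to general $v$ by a monotone/dominated convergence argument in probability, using the hypothesis that $v(Y)$ exists almost surely. The remaining steps are essentially bookkeeping with the characteristic quadruplet.
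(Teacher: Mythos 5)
Your proposal is correct and follows essentially the same route as the paper: rewrite $v(Y)$ via (stochastic) Fubini as the single integral $\int_{S} h_{A}(\bm{\xi},s,\lambda)\, L(\mathrm{d}\bm{\xi},\mathrm{d}s,\mathrm{d}\lambda)$ and then apply Proposition 2.6 of Rajput--Rosinski with the characteristic quadruplet $(a, b, \nu(\mathrm{d}z), \mathrm{d}\bm{\xi}\,\mathrm{d}s\, f(\lambda)\mathrm{d}\lambda)$. The paper simply cites the analogous arguments in the ambit-field literature for the Fubini step, whereas you sketch its justification explicitly, but the substance is the same.
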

\vspace{2mm}
For the marginal and joint distributions of MSTOU processes, we  use $v(\mathrm{d}\mathbf{x}, \mathrm{d}t) = \theta_{1}\delta_{t_{1}}(\mathrm{d}t)\delta_{\mathbf{x}_{1}}(\mathrm{d}\mathbf{x}) + \dots +  \theta_{n}\delta_{t_{n}}(\mathrm{d}t)\delta_{\mathbf{x}_{n}}(\mathrm{d}\mathbf{x})$ where $\{(\mathbf{x}_{j}, t_{j}): j = 1, \dots, n\}$ is a set of different spatio-temporal locations and $\theta_{j} \in \mathbb{R}$ for $j = 1, \dots, n$. With this specification, $C\{1 \ddagger v(Y)\}$ is the joint cumulant generating function (JCGF) of $Y_{t_{1}}(\mathbf{x}_{1}), \dots, Y_{t_{n}}(\mathbf{x}_{n})$. 
\vspace{2mm}
\begin{Eg} \label{eg:dmeasure}
Let $f(\lambda) = \sum_{k = 1}^{p} q_{k}\delta_{\lambda_{k}}(\lambda)$ for $\lambda_{k}>0$ with $\lambda_{k}\neq\lambda_{k'}$ for $k\neq k'$, $q_{k}>0$ and $p\in\mathbb{N}$ such that $\sum_{k = 1}^{p} q_{k} = 1$. This corresponds to a discrete probability measure for $\lambda$. By substituting the form of $f(\lambda)$ in (\ref{eqn:MSTOULevy}), we find that the JCGF of the resulting MSTOU process is equal to:
\begin{align*}
C\{1\ddagger v(Y)\} &= \sum_{i = 1}^{p} \left(i aq_{k} \int_{\mathbb{R}^{d}\times\mathbb{R}}  h_{A}(\bm{\xi},s, \lambda_{k})\mathrm{d}\bm{\xi}\mathrm{d}s - \frac{1}{2} b q_{k}  \int_{\mathbb{R}^{d}\times\mathbb{R}}   h^{2}_{A}(\bm{\xi},s, \lambda_{k})\mathrm{d}\bm{\xi}\mathrm{d}s \right.\\
&\left.+   \int_{\mathbb{R}^{d}\times\mathbb{R}} \int_{\mathbb{R}} \left(\exp(i h_{A}(\bm{\xi},s, \lambda_{k})z) - 1 - i h_{A}(\bm{\xi},s, \lambda_{k})z\mathbf{1}_{|z|\leq 1}\right) q_{k}\nu(\mathrm{d}z)  \mathrm{d}\bm{\xi}\mathrm{d}s \right). 
\end{align*}
From this expression, we find that the MSTOU process is equal in distribution as the superposition of $p$ independent STOU processes:
\begin{equation*}
\sum_{k = 1}^{p} \int_{A_{t}(\mathbf{x})} \exp(-\lambda_{k}(t-s)) L^{(k)}(\mathrm{d}\bm{\xi}, \mathrm{d}s),
\end{equation*}
where $(L^{k})_{k = 1, \dots, p}$ are independent homogeneous L\'evy bases with characteristic triplets $(q_{k}a, q_{k}b, q_{k}\nu(\mathrm{d}z, \cdot))$. We note that the STOU processes have the same ambit set but different rate parameters and possibly different characteristic triplets of their L\'evy bases.  
\end{Eg}
\vspace{2mm}
\begin{Def}[Spatio-temporal stationarity] \hfill \\
Let $x_{1}, ..., x_{n} \in \mathbb{R}^{d}$ and $t_{1}, ..., t_{n} \in \mathbb{R}$ for $n \in \mathbb{N}$. The spatio-temporal random field $Y_{t}(\mathbf{x})$ is \textit{stationary in space-time} if the joint distribution of $Y_{t_{1}}(\mathbf{x}_{1}), ..., Y_{t_{n}}(\mathbf{x}_{n})$ is the same as that of $Y_{t_{1} + \epsilon}(\mathbf{x}_{1} + \mathbf{u}), ..., Y_{t_{n} + \epsilon}(\mathbf{x}_{n} + \mathbf{u})$ for $\mathbf{u} \in \mathbb{R}^{d}$ and $\epsilon \in \mathbb{R}$.
\end{Def}
\vspace{2mm}
\begin{thm} 
\label{thm:tsstation}
Let $Y_{t}(\mathbf{x})$ be an MSTOU process. Then $Y_{t}(\mathbf{x})$ is stationary in space-time.
\end{thm}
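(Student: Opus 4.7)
The plan is to verify stationarity by showing that the joint characteristic function of any finite collection of marginals is invariant under a shift in $(\mathbf{x}, t)$, using the generalised cumulant functional supplied by Theorem \ref{thm:GCF}. Concretely, for fixed locations $(\mathbf{x}_1, t_1), \dots, (\mathbf{x}_n, t_n)$, coefficients $\theta_1, \dots, \theta_n \in \mathbb{R}$, and a translation $(\mathbf{u}, \epsilon) \in \mathbb{R}^d \times \mathbb{R}$, I would compare the GCFs associated with
\[
v = \sum_{j=1}^{n} \theta_j \delta_{\mathbf{x}_j}(\mathrm{d}\mathbf{x}) \delta_{t_j}(\mathrm{d}t) \quad \text{and} \quad v' = \sum_{j=1}^{n} \theta_j \delta_{\mathbf{x}_j + \mathbf{u}}(\mathrm{d}\mathbf{x}) \delta_{t_j + \epsilon}(\mathrm{d}t),
\]
and show they are equal; this directly implies equality in distribution of $(Y_{t_1}(\mathbf{x}_1), \dots, Y_{t_n}(\mathbf{x}_n))$ and $(Y_{t_1 + \epsilon}(\mathbf{x}_1 + \mathbf{u}), \dots, Y_{t_n + \epsilon}(\mathbf{x}_n + \mathbf{u}))$.

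The first key step is the computation
\[
h'_A(\boldsymbol{\xi}, s, \lambda) = \sum_{j=1}^{n} \theta_j \mathbf{1}_A\bigl(\boldsymbol{\xi} - \mathbf{x}_j - \mathbf{u},\, s - t_j - \epsilon\bigr) \exp\bigl(-\lambda(t_j + \epsilon - s)\bigr) = h_A(\boldsymbol{\xi} - \mathbf{u},\, s - \epsilon,\, \lambda),
\]
which uses only translation invariance of the ambit set, $A_t(\mathbf{x}) = A + (\mathbf{x}, t)$, together with simple rearrangement of the exponent. Next, in each of the three integrals over $S$ appearing in \eqref{eqn:MSTOULevy} for $v'$, I would perform the substitution $\tilde{\boldsymbol{\xi}} = \boldsymbol{\xi} - \mathbf{u}$, $\tilde{s} = s - \epsilon$. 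This converts the integrands verbatim into those for $v$, since the Jacobian is unity and the control measure $\mathrm{d}\boldsymbol{\xi}\,\mathrm{d}s\, f(\lambda)\,\mathrm{d}\lambda$ is preserved: Lebesgue measure on $\mathbb{R}^d \times \mathbb{R}$ is translation invariant, and the mixing density $f(\lambda)$ depends only on $\lambda$. Hence the two GCFs agree.

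Before invoking Theorem \ref{thm:GCF} for $v'$, I need to check that $h'_A$ is pointwise finite and $L$-integrable, but both follow at once from the corresponding properties of $h_A$ via the same shift identity and Corollary \ref{cor:intcon}. I do not anticipate a genuine obstacle: the whole argument is driven by the translation invariance of the ambit set combined with the homogeneity of $L$ over $\mathbb{R}^d \times \mathbb{R}$, as encoded in Assumption \ref{assum:LCQ}. The only point requiring care is keeping the signs and shifts in $h_A$ consistent when performing the change of variables, which is routine bookkeeping rather than a conceptual difficulty.
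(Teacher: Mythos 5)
Your proposal is correct and matches the paper's approach: the paper's proof simply defers to the analogous argument for STOU processes (Theorem 3 of \cite{NV2016}) with $h_{A}(\bm{\xi},s)$ replaced by $h_{A}(\bm{\xi},s,\lambda)$, which is exactly the GCF-invariance-under-shift argument you spell out via the change of variables $\tilde{\bm{\xi}} = \bm{\xi}-\mathbf{u}$, $\tilde{s} = s-\epsilon$ and the translation invariance of $\mathrm{d}\bm{\xi}\,\mathrm{d}s$ together with the space-time homogeneity of $L$. Your version is just a more explicit write-up of the same proof.
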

Since $Y$ is stationary, its expectation is the same across space-time locations and the covariance between the process at two locations can be written as a function of their distances apart in space and time:
\\
\begin{Cor} \label{Cor:meancov}
Let $Y$ be an MSTOU process defined by (\ref{eqn:MSTOU}) and $(a, b, \nu(\mathrm{d}z), \mathrm{d}\bm{\xi}\mathrm{d}s f(\lambda)\mathrm{d}\lambda)$ be the CQ of its L\'evy basis $L$. Then, the mean and spatio-temporal covariance of $Y$ are given by: 
\begin{align}
\mathbb{E}\left[Y_{t}(\mathbf{x})\right] &= \left[a +  \int_{\mathbb{R}} z \nu(\mathrm{d}z)\right] \int_{0}^{\infty} \int_{A_{t}(\mathbf{x})} \exp(-\lambda(t-s))\mathrm{d}\bm{\xi}\mathrm{d}s f(\lambda) \mathrm{d}\lambda \nonumber \\
&= \mathbb{E}\left[L'\right] \int_{0}^{\infty} \int_{A_{t}(\mathbf{x})} \exp(-\lambda(t-s))\mathrm{d}\bm{\xi}\mathrm{d}s f(\lambda) \mathrm{d}\lambda, \nonumber \\
\text{and }\Cov(Y_{t}(\mathbf{x}), Y_{t+ d_{t}}(\mathbf{x}+d_{\mathbf{x}})) &= \left[b +  \int_{\mathbb{R}} z^{2}\nu(\mathrm{d}z) \right] \int_{0}^{\infty} \int_{A_{t}(\mathbf{x})\cap A_{t + d_{t}}(\mathbf{x} + d_{\mathbf{x}})} \exp(-2\lambda(t-s) - \lambda d_{t})\mathrm{d}\bm{\xi}\mathrm{d}s f(\lambda) \mathrm{d}\lambda \nonumber \\
&= \Var(L') \int_{0}^{\infty} \int_{A_{t}(\mathbf{x})\cap A_{t + d_{t}}(\mathbf{x} + d_{\mathbf{x}})} \exp(-2\lambda(t-s) -\lambda d_{t})\mathrm{d}\bm{\xi}\mathrm{d}s f(\lambda) \mathrm{d}\lambda, \label{eqn:Ycov}
\end{align} 
where $d_{t} \in \mathbb{R}$ and $d_{\mathbf{x}}\in\mathbb{R}^{d}$ denote distances in time and space while $L'$ denotes the L\'evy seed defined in Definition \ref{def:Lseed} (for our MSTOU process, this does not depend on $\mathbf{z}\in S$).
\end{Cor}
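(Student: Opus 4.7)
My plan is to read off the mean and covariance as the first and mixed second joint cumulants of $Y$, by specialising the generalised cumulant functional of Theorem \ref{thm:GCF} to linear combinations of Dirac point masses and differentiating in the dummy coefficients at zero.

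For the mean, I would take $v = \theta\,\delta_{(\mathbf{x},t)}$, so that $v(Y) = \theta Y_{t}(\mathbf{x})$ and $h_{A}(\bm{\xi},s,\lambda) = \theta\,\mathbf{1}_{A}(\bm{\xi}-\mathbf{x},\,s-t)\exp(-\lambda(t-s))$. Applying $(1/i)\partial_{\theta}|_{\theta=0}$ to (\ref{eqn:MSTOULevy}) kills the quadratic Gaussian term, keeps an $a\cdot\int h_{A}f$ piece from the drift, and reduces $e^{i\theta h_{A}z}-1-i\theta h_{A}z\mathbf{1}_{|z|\leq 1}$ to $h_{A}z\mathbf{1}_{|z|>1}$ inside the L\'evy-measure integral (with $h_{A}$ now evaluated at $\theta=1$). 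Pulling $\int_{|z|>1}z\,\nu(\mathrm{d}z)$ out of the remaining integral over $S$, identifying $a+\int_{|z|>1}z\,\nu(\mathrm{d}z) = \mathbb{E}[L']$ via the L\'evy-Khintchine representation of the L\'evy seed in Definition \ref{def:Lseed}, and a Fubini step to convert the indicator $\mathbf{1}_{A}(\bm{\xi}-\mathbf{x},s-t)$ into the ambit set $A_{t}(\mathbf{x})$ delivers the stated expression for $\mathbb{E}[Y_{t}(\mathbf{x})]$.

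For the covariance, I would take $v = \theta_{1}\,\delta_{(\mathbf{x}_{1},t_{1})} + \theta_{2}\,\delta_{(\mathbf{x}_{2},t_{2})}$ with $(\mathbf{x}_{1},t_{1}) = (\mathbf{x},t)$ and $(\mathbf{x}_{2},t_{2}) = (\mathbf{x}+d_{\mathbf{x}},\,t+d_{t})$, so that $h_{A} = \theta_{1}\phi_{1}+\theta_{2}\phi_{2}$ with $\phi_{k}(\bm{\xi},s,\lambda) = \mathbf{1}_{A}(\bm{\xi}-\mathbf{x}_{k},\,s-t_{k})\exp(-\lambda(t_{k}-s))$. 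The covariance equals $-\partial_{\theta_{1}}\partial_{\theta_{2}}C\{1\ddagger v(Y)\}|_{\theta_{1}=\theta_{2}=0}$; the linear drift term drops, the Gaussian term contributes $b\int\phi_{1}\phi_{2}\,f$, and only the second-order Taylor piece of $e^{ih_{A}z}$ survives the mixed derivative in the L\'evy-measure integral, giving $\int z^{2}\nu(\mathrm{d}z)\cdot\int\phi_{1}\phi_{2}\,f$. Combining these produces the prefactor $b+\int z^{2}\nu(\mathrm{d}z) = \Var(L')$, while $\phi_{1}\phi_{2} = \mathbf{1}_{A_{t_{1}}(\mathbf{x}_{1})\cap A_{t_{2}}(\mathbf{x}_{2})}(\bm{\xi},s)\exp(-\lambda(t_{1}+t_{2}-2s))$ reproduces the intersection and the $\exp(-2\lambda(t-s)-\lambda d_{t})$ factor of (\ref{eqn:Ycov}) after substituting $t_{1}=t$, $t_{2}=t+d_{t}$.

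The main technical obstacle is justifying differentiation under the integral sign in the L\'evy-measure term. I would handle this with standard envelope estimates: $|e^{i\theta hz}-1-i\theta hz\mathbf{1}_{|z|\leq 1}|$ together with its first two $\theta$-derivatives can be bounded uniformly for $\theta$ in a neighbourhood of zero by quantities of the form $C\,h^{2}z^{2}\wedge C(1+|hz|)$. Combined with Assumption \ref{assum:finite2m}, which gives $\int z^{2}\nu(\mathrm{d}z)<\infty$, and with Corollary \ref{cor:intcon}, which controls the $h_{A}$-dependent integrals over $S$, these estimates produce an integrable dominating envelope and dominated convergence validates both the single derivative used for the mean and the mixed derivative used for the covariance.
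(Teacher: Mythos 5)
Your proposal is correct and follows essentially the same route as the paper's proof: specialise the generalised cumulant functional of Theorem \ref{thm:GCF} to Dirac-type measures, differentiate in the dummy coefficients at zero, and identify the resulting prefactors with $\mathbb{E}[L']$ and $\Var(L')$; the only organisational difference is that the paper works with the bivariate choice of $v$ throughout and splits the L\'evy-measure term over $A_{t_{1}}(\mathbf{x}_{1})\backslash A_{t_{2}}(\mathbf{x}_{2})$, $A_{t_{2}}(\mathbf{x}_{2})\backslash A_{t_{1}}(\mathbf{x}_{1})$ and the intersection before differentiating, whereas you write $h_{A}=\theta_{1}\phi_{1}+\theta_{2}\phi_{2}$ and note that only the mixed second-order term survives. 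Your added domination argument for differentiating under the integral, and your reading $\mathbb{E}[L']=a+\int_{|z|>1}z\,\nu(\mathrm{d}z)$ from the truncated representation in Definition \ref{def:Lseed}, are if anything more careful than the paper's own write-up.
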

\vspace{2mm} 
\begin{Rem}
From (\ref{eqn:Ycov}), we find that the correlation of $Y$:
\begin{equation*}
\Corr(Y_{t}(\mathbf{x}), Y_{t+ d_{t}}(\mathbf{x}+d_{\mathbf{x}})) = \frac{\int_{0}^{\infty} \int_{A_{t}(\mathbf{x})\cap A_{t + d_{t}}(\mathbf{x} + d_{\mathbf{x}})} \exp(-2\lambda(t-s) -\lambda d_{t})\mathrm{d}\bm{\xi}\mathrm{d}s f(\lambda) \mathrm{d}\lambda}{\int_{0}^{\infty} \int_{A_{t}(\mathbf{x})} \exp(-2\lambda(t-s))\mathrm{d}\bm{\xi}\mathrm{d}s f(\lambda) \mathrm{d}\lambda}. 
\end{equation*}
This means that it depends both on the shape of the integration set $A_{t}(\mathbf{x})$ and $f(\lambda)$. The additional dependence on the ambit set means that it is harder to establish long memory based on regularly varying characteristics as done in \cite{FK2007} and \cite{STW2015}. Nevertheless, we show in Section \ref{sec:gclass} that long memory can be established in our isotropic $g$-class of MSTOU processes.
\end{Rem}

\subsection{Mixing properties}

Spatio-temporal stationarity and mixing properties are useful properties to establish the consistency of moment-based estimators such as the GMM estimators which we construct in Section \ref{sec:Infer}. The following definition is adapted from \cite{PV2017}:
\vspace{2mm}
\begin{Def}[Mixing] \hfill \\
Let $\{Y_{t}(\mathbf{x})\}_{\mathbf{x}\in\mathbb{R}^{d}, t\in\mathbb{R}}$ be a stationary process and $(\mathbf{v}_{n})_{n\in\mathbb{N}}$ be a sequence of spatio-temporal lags such that $\lim_{n\rightarrow\infty} ||\mathbf{v}_{n}||_{\infty}  = \infty$ where $||\cdot||_{\infty}$ refers to the supremum norm. We define the transformation $\theta_{\mathbf{v}}(B)$ such that $\theta_{\mathbf{v}}(B) = \{\omega'\in\Omega: Y_{0}(\mathbf{0})(\omega') = (Y_{0}(\mathbf{0}) + \mathbf{v})(\omega) \text{ for }\omega\in B\}$ for any $B\in\sigma_{Y}$, the $\sigma$-algebra generated by $\{Y_{t}(\mathbf{x})\}$. We call $\{Y_{t}(\mathbf{x})\}$ \textit{mixing} if, for all $A, B\in\sigma_{Y}$:
\begin{equation*}
\lim_{n\rightarrow\infty} P(A\cap\theta_{\mathbf{v}_{n}}(B)) = P(A)P(B). 
\end{equation*}
\end{Def}
\vspace{2mm}
The next result corresponds to the one-dimensional case in Theorem 3.6.~of \cite{PV2017}:
\vspace{2mm}
\begin{thm}\label{thm:mixing}
Let $Y_{t}(\mathbf{x})$ be an MSTOU process. Then, $Y$ is mixing. 
\end{thm}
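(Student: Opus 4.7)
The plan is to adapt the argument of Theorem 3.6 in Passeggeri--Veraart (PV2017) to the MSTOU setting, since both concern mixing of stationary L\'evy-driven ambit-type fields and the authors explicitly describe the statement as the spatio-temporal analogue of that one-dimensional result. Having already established stationarity in Theorem \ref{thm:tsstation}, it suffices to show that for any two finite collections of spatio-temporal locations $\{(\mathbf{x}_j, t_j)\}_{j=1}^{n_1}$ and $\{(\mathbf{x}_k', t_k')\}_{k=1}^{n_2}$ with associated real weights, the joint characteristic function of the vector formed by $(Y_{t_j}(\mathbf{x}_j))_j$ and the shifted block $(Y_{t_k'+\epsilon_n}(\mathbf{x}_k'+\mathbf{u}_n))_k$ factorises in the limit $\|(\mathbf{u}_n, \epsilon_n)\|_\infty \to \infty$. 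This characterisation of mixing via joint characteristic functions is standard for stationary processes and is equivalent to the definition given in the excerpt.

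The next step is to use Theorem \ref{thm:GCF} to expand this characteristic function. If $v$ is the signed atomic measure corresponding to the union of the two location sets, then $h_A$ splits as $h_A = h_A^{(1)} + h_A^{(2),n}$, where $h_A^{(2),n}$ is supported, in the spatio-temporal variables, in the translated causality cone. Subtracting $C\{1\ddagger v^{(1)}(Y)\} + C\{1\ddagger v^{(2)}(Y)\}$ from $C\{1\ddagger v(Y)\}$ leaves three cross-terms coming from (\ref{eqn:MSTOULevy}): the drift cross-term vanishes identically by linearity, the Gaussian cross-term is $b\int_S h_A^{(1)} h_A^{(2),n}\, f(\lambda)\mathrm{d}\bm{\xi}\mathrm{d}s\mathrm{d}\lambda$, and the L\'evy jump cross-term has the form $\int_S\int_\mathbb{R}\bigl(e^{i(h_A^{(1)}+h_A^{(2),n})z} - e^{ih_A^{(1)}z} - e^{ih_A^{(2),n}z} + 1\bigr)\nu(\mathrm{d}z)\,f(\lambda)\mathrm{d}\bm{\xi}\mathrm{d}s\mathrm{d}\lambda$. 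I would show each tends to zero. For the Gaussian piece, the integrand is supported on the intersection of the two causality cones, and by translation invariance together with the exponential kernel it is bounded for each $\lambda$ by $e^{-\lambda|\epsilon_n|}$ times an integrable function, so Corollary \ref{cor:intcon} and dominated convergence in $\lambda$ give the limit. For the L\'evy piece, the pointwise bound $|e^{i(u_1+u_2)} - e^{iu_1} - e^{iu_2} + 1| \leq 2\min(1, |u_1 u_2|)$ reduces the integrand to a product form that decays whenever the supports separate, and another dominated convergence argument using Assumption \ref{assum:finite2m} and (\ref{eqn:simicon}) concludes.

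The main obstacle I expect is the case of a purely spatial lag (i.e.\ $\epsilon_n = 0$ but $\|\mathbf{u}_n\|_\infty \to \infty$), where the exponential temporal decay does not directly force the cross-terms to vanish. In this regime the decorrelation must come from the geometry of the ambit set: one needs the Lebesgue measure of $A_t(\mathbf{0}) \cap A_t(\mathbf{u}_n)$, weighted by $e^{-2\lambda(t-s)}f(\lambda)$, to tend to zero as $\|\mathbf{u}_n\| \to \infty$. This will require either a mild spatial-localisation hypothesis on $A_0(\mathbf{0})$ (for instance, that its time-$s$ slice has finite Lebesgue measure for every $s$, with those measures integrable against $e^{-2\lambda(t-s)}$) or invocation of the analogous condition from PV2017. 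Once this geometric input is in place, the dominated convergence step proceeds uniformly over $\lambda$, and mixing follows.
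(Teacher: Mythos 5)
Your overall architecture --- reduce mixing to asymptotic factorisation of the joint characteristic function of two finite blocks, expand it via Theorem \ref{thm:GCF}, and show that the drift, Gaussian and L\'evy cross-terms vanish as the lag diverges --- is sound, and it is essentially the infinitely-divisible-field criterion that the paper itself relies on: no self-contained proof is given in the appendix, the result being deferred to the argument behind Theorem 3.6 of \cite{PV2017} (where, for ID fields, one may even reduce to pairs of points \`a la Maruyama/Rosi\'nski--\.Zak, so that only a Gaussian-covariance term and one L\'evy cross-term need to be controlled --- for the finite-variance MSTOU this amounts to the same computation you set up).

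The genuine gap is exactly where you hedge: the purely spatial (or bounded-temporal) lag. Theorem \ref{thm:mixing} is stated under the standing assumptions only (Assumptions \ref{assum:LCQ}--\ref{assum:finite2m} and the well-definedness condition of Corollary \ref{cor:intcon}), so a proof that ``requires a mild spatial-localisation hypothesis on $A_0(\mathbf{0})$'' would not prove the theorem as stated. In fact no extra hypothesis is needed, and the localisation you ask for is already a consequence of Corollary \ref{cor:intcon}: by Fubini, for $f(\lambda)\mathrm{d}\lambda$-a.e.\ $\lambda$ the function $F_\lambda(\bm{\xi},s)=\mathbf{1}_{A_t(\mathbf{x})}(\bm{\xi},s)e^{-\lambda(t-s)}$ lies in $L^1(\mathrm{d}\bm{\xi}\,\mathrm{d}s)$, and it is bounded by $1$ since non-anticipation forces $s\le t$ on the ambit set. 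Each cross-term (the Gaussian one directly; the L\'evy one after the bound $|(e^{iu_1z}-1)(e^{iu_2z}-1)|\le z^{2}|u_1||u_2|$ together with $\int z^{2}\nu(\mathrm{d}z)<\infty$ --- note also that your constant $2$ in $2\min(1,|u_1u_2|)$ should be $4$) is then a finite sum of terms of the form $\int_0^\infty\!\int_{\mathbb{R}^{d+1}}F_\lambda(\mathbf{z})\,G_\lambda(\mathbf{z}-\mathbf{v}_n)\,\mathrm{d}\mathbf{z}\,f(\lambda)\mathrm{d}\lambda$ with $F_\lambda,G_\lambda\in L^1$ and $0\le F_\lambda,G_\lambda\le C$. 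Your proposed mechanism for this regime --- dominated convergence on the indicator of $A_t(\mathbf{x})\cap A_t(\mathbf{x}+\mathbf{u}_n)$ --- does not quite work, because when the time slices of $A_0(\mathbf{0})$ have finite Lebesgue measure but are unbounded, the indicator need not converge to zero pointwise along an arbitrary sequence $\mathbf{u}_n$. The correct elementary step is a tail-splitting argument valid for any direction of divergence: for fixed $\lambda$ and $\varepsilon>0$ choose $R$ with $\int_{|\mathbf{z}|>R}F_\lambda<\varepsilon$ and $\int_{|\mathbf{z}|>R}G_\lambda<\varepsilon$; then for $\|\mathbf{v}_n\|_\infty>2R$ the inner integral is at most $2C\varepsilon$, whether the lag is spatial, temporal or mixed. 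A final dominated convergence in $\lambda$, with dominating function $C\int\mathbf{1}_{A_t(\mathbf{x})}e^{-\lambda(t-s)}\,\mathrm{d}\bm{\xi}\,\mathrm{d}s$ (integrable by Corollary \ref{cor:intcon}), kills all cross-terms and yields mixing with no additional geometric assumption; filling in this step would complete your proof.
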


\subsection{Isotropy and long memory in the $g$-class} \label{sec:gclass}

In this subsection, we look at a class of isotropic MSTOU processes and explore the long-range dependence structures that they can generate. 
\\
\begin{Def}[$g$-class processes] \hfill \\
Let $t\in\mathbb{R}$ and $\mathbf{x}\in\mathbb{R}^{d}$ for $d\in\mathbb{N}$. The \textit{$g$-class of MSTOU processes} is the set of MSTOU processes where the ambit sets are given by:
\begin{equation*}
A_{t}(\mathbf{x}) = \{(\bm{\xi}, s): s\leq t, |\mathbf{x} - \bm{\xi}| \leq g(|t-s|)\},
\end{equation*}
for some positive and non-decreasing function $g:[0, \infty)\rightarrow \mathbb{R}$. 
\end{Def}
\vspace{2mm}
Figure \ref{fig:gclass} shows the ambit sets for $g(|t-s|) = c|t-s|$ for some $c>0$ when we have $d = 1, 2$ and $3$. Due to the exponential kernel in the MSTOU integral, the phenomena observed at a spatial location $\mathbf{x}\in\mathbb{R}^{d}$ is generally more affected by recent events at nearby locations and less affected by older events at locations further away. However, due to the random rate parameters, different events have different levels of influence. By looking at the temporal cross-sections of the ambit sets (i.e.~the spatial ranges for fixed $s$) when $s$ increases towards $t$, we see that the news from or effects of surrounding locations travel towards the point of interest. Here, the parameter $c$ is related to the speed of this travel. For $d = 1$, it determines the length of the spatial line of influence from past epochs; for $d = 2$, it determines the radius of the circle of influence; and for $d = 3$, it determines the radius of the sphere of influence. Similar interpretations hold for more general $g$ functions since they are non-decreasing. For example, we can modulate the behaviour of the travel by setting $g$ to be a quadratic function.
\vspace{2mm}
\begin{Cor} \label{cor:gcon}
Under Assumptions \ref{assum:LCQ} and \ref{assum:finite2m}, a $g$-class MSTOU process is well-defined if:
\begin{equation}
\int_{0}^{\infty} \int_{-\infty}^{t} g^{d}(|t-s|) \exp(-\lambda(t-s)) f(\lambda) \mathrm{d}s \mathrm{d}\lambda < \infty \text{ and } \int_{0}^{\infty} \int_{-\infty}^{t}  g^{d}(|t-s|)\exp(-2\lambda(t-s)) f(\lambda) \mathrm{d}s \mathrm{d}\lambda < \infty. \label{eqn:gcon}
\end{equation} 
\end{Cor}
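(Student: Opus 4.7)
The plan is to reduce the abstract integrability conditions from Corollary \ref{cor:intcon} to explicit one-dimensional integrals by exploiting the special product structure of the $g$-class ambit sets.

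First I would observe that for a $g$-class process, the temporal slice of $A_{t}(\mathbf{x})$ at fixed $s \leq t$ is precisely the Euclidean ball $B(\mathbf{x}, g(|t-s|)) \subset \mathbb{R}^{d}$, and the slice is empty for $s > t$. Since the integrands in (\ref{eqn:simicon}) are non-negative, Tonelli's theorem permits us to integrate over $\bm{\xi}$ first. The spatial integration reduces to computing the Lebesgue volume of a $d$-dimensional ball, giving $\Leb(B(\mathbf{x}, g(|t-s|))) = \omega_{d}\,g^{d}(|t-s|)$, where $\omega_{d} = \pi^{d/2}/\Gamma(d/2+1)$ is the volume of the unit ball in $\mathbb{R}^{d}$.

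Substituting this into (\ref{eqn:simicon}) and pulling the constant $\omega_d$ out, the first condition in Corollary \ref{cor:intcon} becomes
\begin{equation*}
\omega_{d}\int_{0}^{\infty}\int_{-\infty}^{t} g^{d}(|t-s|)\exp(-\lambda(t-s))\,f(\lambda)\,\mathrm{d}s\,\mathrm{d}\lambda < \infty,
\end{equation*}
and analogously for the second. Since $\omega_{d}>0$ is a finite dimensional constant, finiteness of these integrals is equivalent to the conditions (\ref{eqn:gcon}) stated in the corollary, which completes the reduction.

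There is no real obstacle here: the statement is essentially a bookkeeping exercise. The only point requiring a brief justification is the applicability of Tonelli, which is immediate because both $\exp(-\lambda(t-s))f(\lambda)$ and $\exp(-2\lambda(t-s))f(\lambda)$ are non-negative measurable functions on $A_{t}(\mathbf{x})\times(0,\infty)$. Thus the proof amounts to one invocation of Tonelli plus the elementary volume formula for a Euclidean ball.
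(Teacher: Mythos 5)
Your argument is correct and matches the paper's proof: the paper likewise reduces (\ref{eqn:simicon}) to (\ref{eqn:gcon}) by noting that the temporal cross-section of the ambit set is the $d$-dimensional ball of radius $g(|t-s|)$, whose volume contributes the constant factor $\pi^{d/2}/\Gamma(d/2+1)$ times $g^{d}(|t-s|)$. Your additional remarks on Tonelli simply make explicit what the paper leaves implicit.
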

\vspace{2mm}
\begin{Eg}\label{eg:gcanon}
Consider the case with $g(|t-s|) = c|t-s|$ for some $c>0$. Then, (\ref{eqn:gcon}) holds when $\int_{0}^{\infty} \frac{1}{\lambda^{d+1}} f(\lambda) \mathrm{d}\lambda < \infty$. This is fulfilled for example when $f(\lambda) = \frac{\beta^{\alpha}}{\Gamma(\alpha)} \lambda^{\alpha-1} e^{-\beta \lambda}$, the Gamma density with shape and rate parameters,  $\alpha>d+1$ and $\beta >0$ since:
\begin{equation*}
\int_{0}^{\infty}\frac{1}{\lambda^{d+1}} f(\lambda) \mathrm{d}\lambda = \frac{\beta^{d+1} }{(\alpha - 1)\dots(\alpha - (d+1))} < \infty.
\end{equation*}
\end{Eg} 
\vspace{2mm}
Now that we have simple integrability conditions for the $g$-class, we proceed to prove its key property: isotropy. 
\vspace{2mm}
\begin{Def}[Isotropy] \hfill \\
Let $t\in\mathbb{R}$ and $\mathbf{x}\in\mathbb{R}^{d}$. A spatio-temporal process $Y_{t}(\mathbf{x})$ is called \textit{isotropic} if its spatial covariance:
\begin{align*}
\Cov(Y_{t}(\mathbf{x}), Y_{t}(\mathbf{x}+d_{\mathbf{x}})) &= C(|d_{\mathbf{x}}|),
\end{align*}
for some positive definite function $C$.
\end{Def}
\vspace{2mm}
\begin{thm} \label{thm:giso}
Let $Y$ be a $g$-class MSTOU process. Then, $Y$ is isotropic in space.
\end{thm}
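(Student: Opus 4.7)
The plan is to start from the closed form for the spatial covariance given by Corollary \ref{Cor:meancov} (setting $d_t=0$):
\begin{equation*}
\Cov(Y_{t}(\mathbf{x}), Y_{t}(\mathbf{x}+d_{\mathbf{x}})) = \Var(L') \int_{0}^{\infty} \int_{A_{t}(\mathbf{x})\cap A_{t}(\mathbf{x} + d_{\mathbf{x}})} \exp(-2\lambda(t-s))\,\mathrm{d}\bm{\xi}\,\mathrm{d}s\, f(\lambda) \mathrm{d}\lambda,
\end{equation*}
and reduce the problem to showing that the inner spatial integral depends on $d_\mathbf{x}$ only through $|d_\mathbf{x}|$. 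Positive-definiteness of the resulting function $C$ is automatic because any covariance function of a (well-defined) second-order random field is positive definite, so the entire content of the theorem is the rotational invariance of this integral.

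The first step is to use Fubini's theorem to peel off the temporal variable. For a $g$-class ambit set, the time-$s$ slice of $A_t(\mathbf{x})$ is the Euclidean ball $B(\mathbf{x}, g(t-s)) \subset \mathbb{R}^d$ when $s \le t$ and is empty otherwise; hence
\begin{equation*}
\int_{A_{t}(\mathbf{x})\cap A_{t}(\mathbf{x} + d_{\mathbf{x}})} e^{-2\lambda(t-s)} \,\mathrm{d}\bm{\xi}\,\mathrm{d}s = \int_{-\infty}^{t} e^{-2\lambda(t-s)}\, \Leb\bigl(B(\mathbf{x}, g(t-s)) \cap B(\mathbf{x}+d_\mathbf{x}, g(t-s))\bigr)\,\mathrm{d}s.
\end{equation*}
The integrability needed to apply Fubini is guaranteed by Corollary \ref{cor:gcon}, which bounds $\int g^d(t-s)e^{-2\lambda(t-s)}f(\lambda)\,\mathrm{d}s\,\mathrm{d}\lambda$ and thus dominates the intersection integrand pointwise.

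The second step is the geometric core: I will show that for any $r\ge 0$, the Lebesgue measure of the intersection of two equal-radius balls $B(\mathbf{x},r)\cap B(\mathbf{x}+d_\mathbf{x},r)$ depends only on $r$ and $|d_\mathbf{x}|$. Translating by $-\mathbf{x}$ reduces to intersecting $B(\mathbf{0},r)$ with $B(d_\mathbf{x},r)$. Choose any orthogonal matrix $Q\in O(d)$ sending $d_\mathbf{x}$ to $|d_\mathbf{x}| e_1$; since Lebesgue measure is invariant under translations and orthogonal transformations and balls are mapped to balls, we get
\begin{equation*}
\Leb\bigl(B(\mathbf{x},r)\cap B(\mathbf{x}+d_\mathbf{x},r)\bigr) = \Leb\bigl(B(\mathbf{0},r)\cap B(|d_\mathbf{x}| e_1,r)\bigr) =: V_d(r, |d_\mathbf{x}|),
\end{equation*}
a quantity that depends only on $r$ and $|d_\mathbf{x}|$ (explicitly a standard lens volume, though no explicit formula is needed). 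Substituting this back into the covariance expression yields a function of $|d_\mathbf{x}|$ alone, namely
\begin{equation*}
C(|d_\mathbf{x}|) = \Var(L') \int_{0}^{\infty}\int_{-\infty}^{t} e^{-2\lambda(t-s)} V_d(g(t-s), |d_\mathbf{x}|)\,\mathrm{d}s\, f(\lambda)\,\mathrm{d}\lambda,
\end{equation*}
which can also be rewritten by the substitution $u = t-s$ to make the independence of $t$ (already known from Theorem \ref{thm:tsstation}) transparent.

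I do not anticipate any real obstacle: the proof is essentially a one-line reduction to the isotropy of balls in Euclidean space, once the covariance is written in the integrated form provided by Corollary \ref{Cor:meancov}. The only mildly delicate point is justifying Fubini on the intersection integral, but that follows immediately from the integrability bounds in Corollary \ref{cor:gcon} since the intersection volume is bounded by $V_d(g(t-s),0)$, a constant multiple of $g(t-s)^d$.
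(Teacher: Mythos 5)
Your proof is correct and follows essentially the same route as the paper: both start from the spatial covariance formula of Corollary \ref{Cor:meancov} with $d_{t}=0$, slice the ambit-set intersection in time, and reduce isotropy to the fact that the volume of the intersection of two equal-radius $d$-balls depends only on the common radius $g(|t-s|)$ and the centre distance $|d_{\mathbf{x}}|$. The only difference is cosmetic: the paper justifies this last fact by writing the lens volume explicitly (a segment length $2g(|t-s|)-|d_{x}|$ for $d=1$ and an incomplete-beta spherical-cap formula for general $d$, which it reuses later for explicit covariance computations), whereas you obtain it abstractly from the invariance of Lebesgue measure under translations and orthogonal maps.
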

\vspace{2mm}

\begin{figure}[tbp]
\centering
\caption{Ambit sets in one to three spatial dimensions for a $g$-class MTOU process with $g(|t-s|) = c|t-s|$ for some $c>0$.}
\label{fig:gclass}
\includegraphics[width = 6.8in, height = 3in, trim = 1in 0.4in 0.4in 0in]{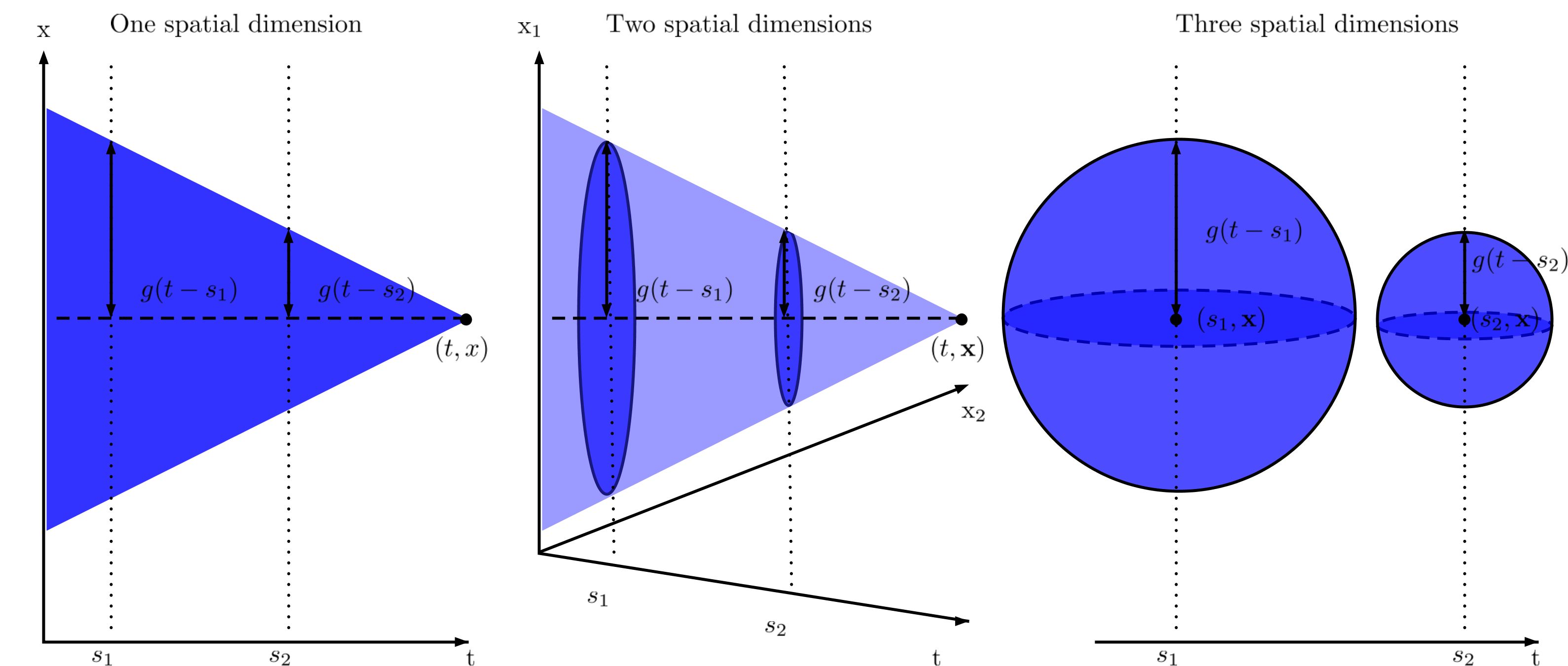}
\end{figure}

\begin{Rem}\label{rem:23D}
In general, the spatial covariance takes a simpler form when $d = 3$ as compared to $d = 2$. This is because we replace $(2g(|t-s|)- d_{x})$ in (\ref{eqn:scovg}) with $\pi(4g(|t-s|) + |d_{x}|)(2g(|t-s|) - |d_{x}|)^{2}/12$ when $d = 3$ which is of a simpler functional form than the $2g^{2}(|t-s|)\cos^{-1}(|d_{\mathbf{x}}|/2g(|t-s|)) - (|d_{\mathbf{x}}|\sqrt{4g^{2}(|t-s|) - |d_{\mathbf{x}}|^{2}})/2$ used for $d = 2$. For mathematical simplicity, it may be useful to embed data in two spatial dimensions in a modelling scenario with three spatial dimensions. However, checks should be made to ensure that if the assumptions on the additional dimension are reasonable for the context.
\end{Rem}
\vspace{2mm}
Examples \ref{eg:canon1d} and \ref{eg:canon1dprop} show that we can construct non-separable covariances using the $g$-class. This means that the spatio-temporal covariances cannot be expressed as products of spatial and temporal covariances \cite[]{CW2011}. In what follows, we also obtain explicit expressions which are useful for inference.
\vspace{2mm}
\begin{Eg}\label{eg:canon1d}
Consider the scenario in Example \ref{eg:gcanon} with $d = 1$. From Corollary \ref{Cor:meancov}, we have that:
\begin{align}
\mathbb{E}\left[Y_{t}(x)\right] &= \mathbb{E}\left[L'\right] \int_{0}^{\infty} \int_{A_{t}(x)} \exp(-\lambda(t-s))\mathrm{d}\xi\mathrm{d}s f(\lambda) \mathrm{d}\lambda, \nonumber \\
&= \mathbb{E}\left[L'\right] \int_{0}^{\infty}\frac{2c}{\lambda^{2}} f(\lambda) \mathrm{d}\lambda, \nonumber \\
&= \frac{2c \beta^{2}\mathbb{E}\left[L'\right]}{(\alpha - 2)(\alpha - 1)} \int_{0}^{\infty}  \frac{\beta^{\alpha-2}}{\Gamma(\alpha-2)} \lambda^{\alpha-2-1} e^{-\beta \lambda}\mathrm{d}\lambda\nonumber  \\
&= \frac{2c \beta^{2}\mathbb{E}\left[L'\right] }{(\alpha - 2)(\alpha - 1)}, \nonumber \\
\text{and } \Cov(Y_{t}(x), Y_{t+ d_{t}}(x+d_{x})) &= \Var(L') \int_{0}^{\infty} \int_{A_{t}(x)\cap A_{t + d_{t}}(x + d_{x})} \exp(-2\lambda(t-s) -\lambda d_{t})\mathrm{d}\xi\mathrm{d}s f(\lambda) \mathrm{d}\lambda, \nonumber \\
 &= \Var(L')  \int_{0}^{\infty} \frac{c}{2\lambda^{2}}\exp\left(-\lambda \max(|d_{t}|, |d_{x}|/c)\right) f(\lambda) \mathrm{d}\lambda \label{eqn:intint} \\
&=  \Var(L')\frac{c\beta^{\alpha}}{2(\beta+A)^{\alpha - 2}(\alpha - 2)(\alpha - 1)} \int_{0}^{\infty} \frac{(\beta + A)^{\alpha-2}}{\Gamma(\alpha - 2)} \lambda^{\alpha-2-1} e^{-(\beta+A)\lambda} \mathrm{d}\lambda \nonumber \\
&=  \Var(L')\frac{c\beta^{\alpha}}{2(\beta+A)^{\alpha - 2}(\alpha - 2)(\alpha - 1)}, \nonumber\\
&=\frac{c\beta^{\alpha}\Var(L')}{2(\beta+\max(|d_{t}|, |d_{x}|/c))^{\alpha - 2}(\alpha - 2)(\alpha - 1)}, \nonumber
\end{align}
where $A = \max(|d_{t}|, |d_{x}|/c)$ and (\ref{eqn:intint}) holds from the results in Example 3 of \cite{NV2016}.
\end{Eg}
\vspace{2mm}
\begin{Eg}\label{eg:canon1dprop}
Let $L$ be a spatio-temporal extension of the compound Poisson L\'evy basis defined in \cite{FK2007}:
\begin{equation*}
L(E) = \sum_{k = -\infty}^{\infty} Z_{k}\mathbf{1}_{\{(\Gamma_{k}, \lambda_{k})\in A\}} \text{ for } E\in\mathcal{B}_{b}(S),
\end{equation*} 
where $\{Z_{k}\}_{k\in\mathbb{N}}$ is a sequence of independent, identically distributed (i.i.d.) random variables with distribution function $G$, $\left\{\Gamma_{k} = \left(\Gamma_{k}^{(1)}, \Gamma_{k}^{(2)}\right)\right\}_{k\in\mathbb{N}}$ denote the spatio-temporal jump locations of a Poisson process $N = (N_{t}(\mathbf{x}))_{(\mathbf{x}, t)\in \mathbb{R}^{d}\times\mathbb{R}}$ with intensity $\mu$, and $\lambda_{k}$ is an i.i.d. sequence with probability density function $f$. These three components are independent of each other.
\\
This means that the L\'evy seed is a compound Poisson random variable, i.e.~$L' = \sum_{k = 1}^{N_{1}(\mathbf{1})}Z_{k}$. Its mean and variance are:
\begin{equation*}
\mathbb{E}\left[L'\right] = \mu\mathbb{E}\left[Z_{k}\right] \text{ and } \Var(L') = \mu\left(\Var(Z_{k}) + \left(\mathbb{E}\left[Z_{k}\right]\right)^{2}\right).
\end{equation*}
Suppose that for the model in Example \ref{eg:canon1d}, we have $Z_{k} \sim$ Gamma$(\alpha_{Z}, \beta_{Z})$ for $k \in\mathbb{N}$ then:
\begin{align*}
\mathbb{E}\left[Y_{t}(x)\right] &= \frac{2c \beta^{2}\mu\alpha_{Z}}{(\alpha - 2)(\alpha - 1)\beta_{Z}},\\
\text{and } \Cov(Y_{t}(x), Y_{t+ d_{t}}(x+d_{x})) &=\frac{c\beta^{\alpha}\mu \alpha_{Z}(\alpha_{Z}+1)}{2(\beta+\max(|d_{t}|, |d_{x}|/c))^{\alpha - 2}(\alpha - 2)(\alpha - 1)\beta_{Z}^{2}}. 
\end{align*}
\end{Eg}
\vspace{2mm}
To compute the spatio-temporal covariance when $g(|t-s|) = c|t-s|$ for $c>0$ and $d>1$, we need to consider $A_{t}(\mathbf{x})\cap A_{t}(\mathbf{x}+d_{\mathbf{x}})$ in two cases: $|d_{\mathbf{x}}| > c d_{t}$ and  $|d_{\mathbf{x}}| \leq c d_{t}$ for $d_{t}\geq 0$. For the former case, the intersection begins at time $t^{*} = t + (d_{t} - |d_{\mathbf{x}}|/c)/2$ and the temporal cross-section is equal to the volume of the intersection of two $d$-spheres with centres $\mathbf{x}$ and $\mathbf{x} + d_{\mathbf{x}}$, and radii $g(|t-s|)$ and $g(|t+d_{t} -s|)$ respectively; for the latter case, the intersection begins at $t$ and the temporal cross-section is equal to the volume of the $d$-sphere with centre $\mathbf{x}$ and radius $g(|t-s|)$. 
\\
It is more complicated to work out the spatio-temporal covariances for a general $g$ function because the forms of $A_{t}(\mathbf{x})\cap A_{t}(\mathbf{x}+d_{\mathbf{x}})$ would depend on the curvature of the ambit set. Instead of computing spatio-temporal covariances, we now focus on the spatial and the temporal covariances separately in order to establish short-range or long-range dependence. 
\vspace{2mm}
\begin{Def}[Temporal and spatial short/long-range dependence] \hfill \\
The spatio-temporal process $\{Y_{t}(\mathbf{x}): t\in\mathbb{R}, \mathbf{x}\in\mathbb{R}^{d}\}$ is said to have \textit{temporal short-range dependence} if:
\begin{equation*}
\int_{0}^{\infty} \Cov(Y_{t}(\mathbf{x}), Y_{t+ \tau}(\mathbf{x}))\mathrm{d}\tau < \infty,
\end{equation*}
and \textit{temporal long-range dependence} if the integral is infinite. 
\\
Similarly, an isotropic process has \textit{spatial short-range dependence} if:
\begin{equation*}
\int_{0}^{\infty} C(r)\mathrm{d}r < \infty,
\end{equation*}
where $\Cov(Y_{t}(\mathbf{x}), Y_{t}(\mathbf{x}+d_{\mathbf{x}})) = C(|d_{\mathbf{x}}|)$ and $r = |d_{\mathbf{x}}|$. It is said to have \textit{spatial long-range dependence} if the integral is infinite.
\end{Def}
\vspace{2mm}
\begin{Eg}
Consider the model used in Example \ref{eg:canon1d}. Set $r = d_{x} = 0$ and $\tau = d_{t}$. Then:
\begin{align*}
\int_{0}^{\infty} \Cov(Y_{t}(x), Y_{t+ \tau}(x))\mathrm{d}\tau &= \frac{c\beta^{\alpha}\Var(L')}{2(\alpha - 2)(\alpha - 1)} \int_{0}^{\infty} (\beta+\tau)^{-(\alpha - 2)} \mathrm{d}\tau \\
&= \frac{c\beta^{\alpha}\Var(L')}{2(\alpha - 2)(\alpha - 1)} \left[ \frac{(\beta+\tau)^{-(\alpha - 3)}}{3 - \alpha} \right]_{0}^{\infty} \\
&= \frac{c\beta^{3}\Var(L')}{2(\alpha - 2)(\alpha - 1)(\alpha - 3)}, 
\end{align*}
for $\alpha >3$. For $2 < \alpha \leq 3$, this integral is infinite and the process has temporal long-range dependence. These parameter bounds also apply to spatial long-range dependence since if $r = d_{x}$ and $\tau = d_{t} = 0$, we have:
\begin{align*}
\int_{0}^{\infty} C(r)\mathrm{d}r &= \frac{c\beta^{\alpha}\Var(L')}{2(\alpha - 2)(\alpha - 1)} \int_{0}^{\infty} (\beta+ r/c)^{-(\alpha - 2)} \mathrm{d}r\\
&= \frac{c\beta^{\alpha}\Var(L')}{2(\alpha - 2)(\alpha - 1)} \left[ \frac{c(\beta+r/c)^{-(\alpha - 3)}}{3 - \alpha} \right]_{0}^{\infty} \\
&= \frac{c^{2}\beta^{3}\Var(L')}{2(\alpha - 2)(\alpha - 1)(\alpha - 3)}, 
\end{align*}
for $\alpha > 3$. But the integral diverges for $2<\alpha\leq 3$. 

\begin{figure}[tbp]
\centering
\caption{(a) Three choices of $f(\lambda)$ and (b) the spatial correlation structures ($\rho^{(S)}$) of the corresponding MSTOU processes. Since we have set $c = 1$, these share the same forms as the temporal correlations.}
\label{fig:fcorr}
\includegraphics[width = 4.5in, height = 2.2in, trim = 0.4in 0.4in 0.4in 0in]{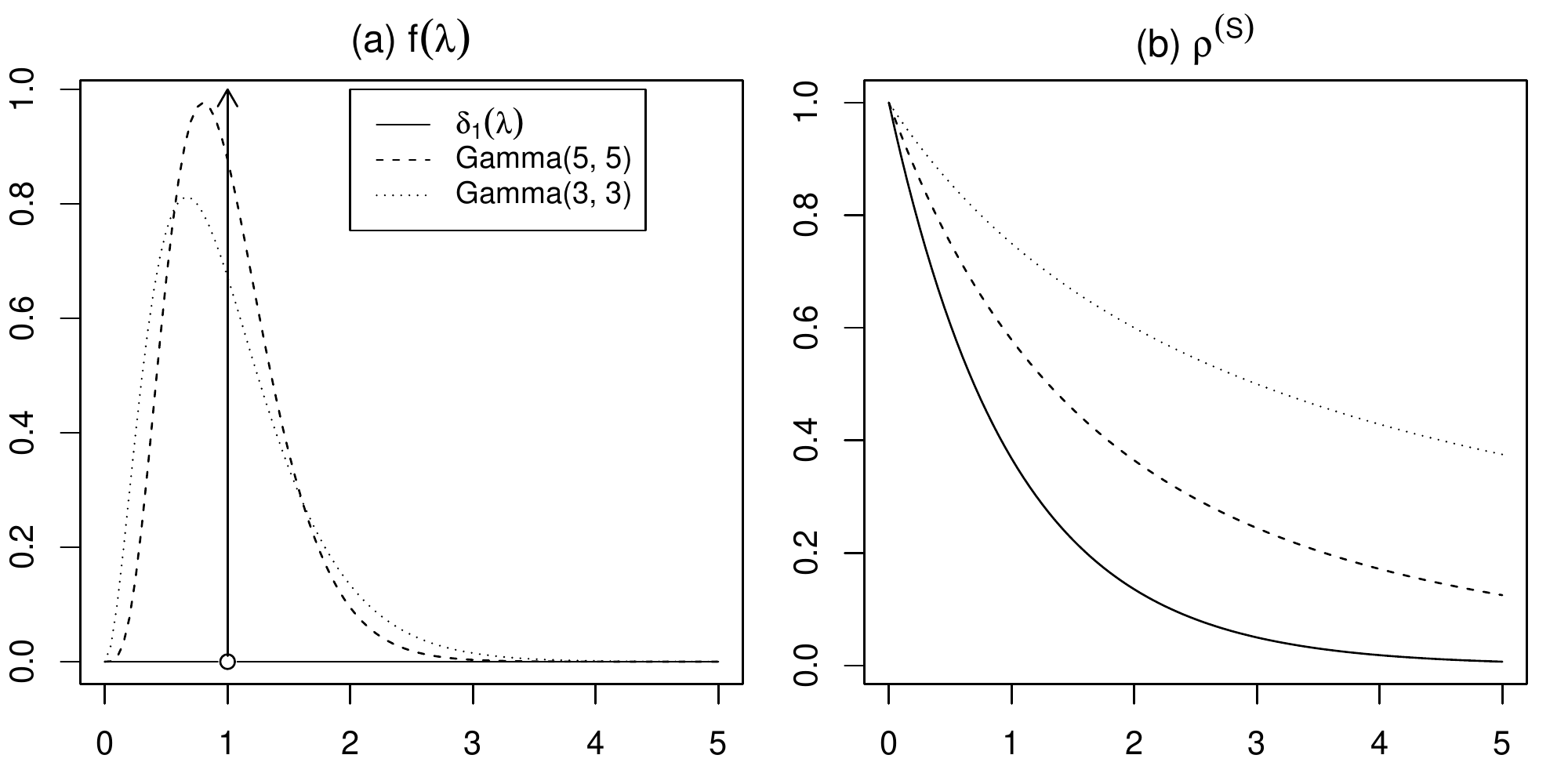}
\end{figure}

Figure \ref{fig:fcorr} shows three choices of $f(\lambda)$ and the spatial correlation structures of the corresponding MSTOU processes. Here, we have the results for the Dirac delta measure at $1$ in bold curves, that for the Gamma$(5, 5)$ density in dashed curves and that for the Gamma$(3, 3)$ density in dotted curves. For all the cases, we have set $c = 1$ so that the temporal correlation function is the same as the spatial one. While the first case leads to exponential correlation, the second and third lead to short-range and long-range correlation respectively. 
\end{Eg}
\vspace{2mm}
We can establish long-range dependence for a similar process in three spatial dimensions:
\vspace{2mm}
\begin{Eg}\label{eg:3Dcanon}
Consider three dimensional space ($d = 3$) and the case with $g(|t-s|) = c|t-s|$ for $c>0$. Let $f(\lambda)$ be the Gamma$(\alpha, \beta)$ density with $\beta > 0$ and $\alpha >4$. From the proof of Theorem \ref{thm:giso} and Remark \ref{rem:23D}, we have that the spatial covariance of our process is:
\begin{align}
\Cov(Y_{t}(\mathbf{x}), Y_{t}(\mathbf{x}+|d_{\mathbf{x}}|)) &= \frac{\pi\Var(L')}{12} \int_{0}^{\infty} \int_{|d_{\mathbf{x}}|/2c}^{\infty}(4cw + |d_{\mathbf{x}}|)(2cw - |d_{\mathbf{x}}|)^{2}\exp(-2\lambda w)\mathrm{d}w f(\lambda) \mathrm{d}\lambda \nonumber \\
&= \frac{c^{2}\pi\Var(L')}{4} \int_{0}^{\infty}\frac{(\lambda |d_{\mathbf{x}}| + 2c)e^{-\lambda |d_{\mathbf{x}}|/c}}{\lambda^{4}} f(\lambda) \mathrm{d}\lambda \nonumber \\
&=  \frac{\beta^{\alpha}c^{\alpha-1}\pi\Var(L')}{4(\alpha - 4)(\alpha - 3)(\alpha - 2)(\alpha - 1)}(\beta c + |d_{\mathbf{x}}|)^{3-\alpha}(2\beta c + (\alpha - 2)|d_{\mathbf{x}}|) \nonumber \\
&=  \frac{\beta^{4}c^{3}\pi\Var(L')}{2(\alpha - 4)(\alpha - 3)(\alpha - 2)(\alpha - 1)} \left(\frac{\beta c + |d_{\mathbf{x}}|}{\beta c}\right)^{3 - \alpha}\left(\frac{2\beta c + (\alpha - 2)|d_{\mathbf{x}}|}{2\beta c}\right) \label{eqn:3Dcov}.
\end{align}
Without loss of generality, let $d_{t} \geq 0$. To compute the temporal covariance, we set $d_{\mathbf{x}} = \mathbf{0}$. Then, $A_{t}(\mathbf{x})\cap A_{t + d_{t}} = A_{t}(\mathbf{x})$. In this case, the temporal cross-section of $A_{t}(\mathbf{x})$ corresponds to a sphere with radius $g(|t-s|)$. So:
\begin{align}
\Cov(Y_{t}(\mathbf{x}), Y_{t + d_{t}}(\mathbf{x}+|d_{\mathbf{x}}|)) &= \frac{4\pi\Var(L')}{3} \int_{0}^{\infty} \int_{0}^{\infty} (cw)^{3}\exp(-2\lambda w - \lambda d_{t})\mathrm{d}w f(\lambda) \mathrm{d}\lambda \nonumber \\
&=  \frac{c^{3}\pi\Var(L')}{2} \int_{0}^{\infty} \frac{1}{\lambda^{4}}\exp(- \lambda d_{t}) f(\lambda) \mathrm{d}\lambda \nonumber \\
&=   \frac{\beta^{4}c^{3}\pi\Var(L')}{2(\alpha - 4)(\alpha - 3)(\alpha - 2)(\alpha - 1)}\left(\frac{\beta}{\beta + d_{t}}\right)^{\alpha - 4}. \label{eqn:3Dtcov}
\end{align}
Using (\ref{eqn:3Dtcov}), we have that: 
\begin{align*}
\int_{0}^{\infty} \Cov(Y_{t}(\mathbf{x}), Y_{t+ \tau}(\mathbf{x}))\mathrm{d}\tau &= \int_{0}^{\infty} \frac{\beta^{4}c^{3}\pi\Var(L')}{2(\alpha - 4)(\alpha - 3)(\alpha - 2)(\alpha - 1)}\left(\frac{\beta}{\beta + \tau}\right)^{\alpha - 4} \mathrm{d}\tau \\
&= \frac{\beta^{\alpha}c^{3}\pi\Var(L')}{2(\alpha - 4)(\alpha - 3)(\alpha - 2)(\alpha - 1)}\left[ \frac{\left(\beta + \tau\right)^{5 - \alpha}}{5 - \alpha} \right]_{0}^{\infty} \\
&= \frac{\beta^{5}c^{3}\pi\Var(L')}{2(\alpha - 5)(\alpha - 4)(\alpha - 3)(\alpha - 2)(\alpha - 1)},
\end{align*}
for $\alpha >5$. But the integral diverges for $4<\alpha\leq 5$. Similarly, using (\ref{eqn:3Dcov}), we have:
\begin{align*}
\int_{0}^{\infty} C(r)\mathrm{d}r &=  \int_{0}^{\infty} \frac{\beta^{4}c^{3}\pi\Var(L')}{2(\alpha - 4)(\alpha - 3)(\alpha - 2)(\alpha - 1)} \left(\frac{\beta c + r}{\beta c}\right)^{3 - \alpha}\left(\frac{2\beta c + (\alpha - 2)r}{2\beta c}\right) \mathrm{d}r \\
&= \frac{3\beta^{5}c^{4}\pi\Var(L')}{4(\alpha -5)(\alpha - 4)(\alpha - 3)(\alpha - 2)(\alpha - 1)},
\end{align*}
for $\alpha > 5$ and the integral diverges for $4<\alpha\leq 5$.
\end{Eg}
\vspace{2mm}

\subsection{Relation to the spatio-temporal CAR$_{\wedge}$ process}

In Example \ref{eg:dmeasure}, we saw that when $f(\lambda)$ is concentrated at $p$ distinct values, MSTOU processes are equal in law to a sum of $p$ independent STOU processes. Here, we consider so-called spatio-temporal CAR$_{\wedge}(p)$ processes and show that they too can be represented as superpositions of $p$ STOU processes. However, these STOU processes share the same underlying L\'evy basis and are correlated. 
\vspace{2mm}
\begin{Def}[Spatio-temporal $\text{CAR}_{\wedge}(p)$ process] \label{def:stCAR}\hfill \\
We call a random field in space-time ($\mathbb{R}^{d}\times \mathbb{R}$) a \textit{spatio-temporal CAR$_{\wedge}(p)$ process} if:
\begin{equation*}
Y_{t}(\mathbf{x}) = \mathbf{b}^{T}\mathbf{X}_{t}(\mathbf{x}),
\end{equation*}
where $\mathbf{b} = (1, 0, \dots ,0)^{T} \in \mathbb{R}^{p}$ and:
\begin{align*}
\mathbf{X}_{t}(\mathbf{x}) &= \int_{A_{t}(\mathbf{x})}\exp\big(A(t-s)\big)\mathbf{e}_{p} L(\mathrm{d}\bm{\xi}, \mathrm{d}s), \\
\text{with } A &= \begin{pmatrix}
0 & 1 & 0 & \dots &  0 \\
0 &  0 & 1 & \dots & 0 \\
\vdots & \vdots & \vdots & \ddots & \vdots \\
-a_{p} & -a_{p-1} & -a_{p-2} & \dots & -a_{1}
\end{pmatrix},
\end{align*}
and $a_{1}, \dots, a_{p} \in\mathbb{R}$. Similar to our ambit set for MSTOU processes, $A_{t}(\mathbf{x}) = (\mathbf{x}, t)$ satisfies the conditions in (\ref{eqn:ambitassumptions}). Here, $L$ is a homogeneous L\'evy basis and $\mathbf{e}_{p}$ is the $p^{\text{th}}$ Euclidean basis vector.
\end{Def}
\vspace{2mm}
\begin{thm}\label{thm:CARp}
Let $Y_{t}(\mathbf{x})$ be a spatio-temporal CAR$_{\wedge}$(p) process as defined in Definition \ref{def:stCAR}. Then:
\begin{equation*}
Y_{t}(\mathbf{x}) = \sum_{i = 1}^{p}\frac{1}{\prod\limits_{\stackrel{1\leq m \leq p}{m\neq i}} (\eta_{i} - \eta_{m})} \int_{A_{t}(\mathbf{x})} \exp(\eta_{i}(t-s)) L(\mathrm{d}\bm{\xi}, \mathrm{d}s),
\end{equation*}
where $\eta_{1}, \dots, \eta_{p}$ are the corresponding negative and distinct eigenvalues of $A$. 
\end{thm}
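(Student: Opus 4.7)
The plan is to diagonalise the companion matrix $A$, extract the scalar kernel $\mathbf{b}^T \exp(A(t-s))\mathbf{e}_p$ by elementary linear algebra, and then invoke linearity of the stochastic integral with respect to $L$. Because $\eta_1,\ldots,\eta_p$ are distinct, $A$ is diagonalisable. Solving $Av = \eta v$ row by row shows that the $i$th eigenvector is the Vandermonde column $v_i = (1,\eta_i,\eta_i^2,\ldots,\eta_i^{p-1})^T$, so with
\[
P = \begin{pmatrix} 1 & 1 & \cdots & 1 \\ \eta_1 & \eta_2 & \cdots & \eta_p \\ \vdots & \vdots & & \vdots \\ \eta_1^{p-1} & \eta_2^{p-1} & \cdots & \eta_p^{p-1}\end{pmatrix}, \qquad D = \diag(\eta_1,\ldots,\eta_p),
\]
I would write $A = PDP^{-1}$ and hence $\exp(A(t-s)) = P\exp(D(t-s))P^{-1}$.

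Because $\mathbf{b} = (1,0,\ldots,0)^T$, the row vector $\mathbf{b}^T P$ is just the first row of $P$, namely $(1,1,\ldots,1)$, so
\[
\mathbf{b}^T \exp(A(t-s))\,\mathbf{e}_p \;=\; \sum_{i=1}^{p} e^{\eta_i(t-s)}\,(P^{-1})_{i,p}.
\]
The remaining step is to identify the last column of $P^{-1}$. My approach is via Lagrange interpolation: for the nodes $\eta_1,\ldots,\eta_p$ define $L_i(x) = \prod_{m\neq i}(x-\eta_m)/\prod_{m\neq i}(\eta_i-\eta_m)$, and match the coefficient of $x^{p-1}$ in the identity $x^{k-1} = \sum_{i=1}^p \eta_i^{k-1} L_i(x)$ (valid for $k=1,\ldots,p$ since both sides are polynomials of degree at most $p-1$ agreeing at the $\eta_i$). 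This yields
\[
\sum_{i=1}^{p} \frac{\eta_i^{k-1}}{\prod_{m\neq i}(\eta_i-\eta_m)} \;=\; \delta_{k,p}, \qquad k=1,\ldots,p,
\]
which is precisely the linear system $P u = \mathbf{e}_p$ satisfied by $u = P^{-1}\mathbf{e}_p$. Hence $(P^{-1})_{i,p} = 1/\prod_{m\neq i}(\eta_i-\eta_m)$.

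Plugging this scalar kernel back into the definition of $\mathbf{X}_t(\mathbf{x})$ and using linearity of the stochastic integral, the constants $1/\prod_{m\neq i}(\eta_i-\eta_m)$ factor out of $p$ independent STOU-type integrals sharing the same driving L\'evy basis, giving the claimed representation. Integrability of each summand is immediate: since the $\eta_i$ are negative, $e^{\eta_i(t-s)}$ decays on the non-anticipative ambit set, so Theorem \ref{thm:intconditions} applies to each kernel.

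The main obstacle is the explicit inversion of the Vandermonde matrix; everything else is routine. I would resolve this via the polynomial identity sketched above rather than a direct cofactor expansion, since it delivers exactly the entries in the last column of $P^{-1}$ with minimal algebra. One point to double-check is the sign convention and ordering of the eigenvalues, but since the $\eta_i$ enter symmetrically in the final sum, no ambiguity arises.
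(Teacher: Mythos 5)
Your proposal is correct and follows essentially the same route as the paper: diagonalise the companion matrix via the Vandermonde eigenvector matrix $V=P$ (whose columns $(1,\eta_i,\ldots,\eta_i^{p-1})^T$ the paper takes from the CARMA literature), extract the scalar kernel $\mathbf{b}^T\exp(A(t-s))\mathbf{e}_p$, and identify the last column of $P^{-1}$ as $\bigl(1/\prod_{m\neq i}(\eta_i-\eta_m)\bigr)_i$. The only difference is that you obtain these entries by a self-contained Lagrange-interpolation coefficient comparison, whereas the paper cites Knuth's explicit Vandermonde-inverse formula; both yield the same coefficients, so the argument is sound.
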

\vspace{2mm}
The definition of a spatio-temporal CAR$_{\wedge}(p)$ process is a spatio-temporal extension of the usual CAR$(p)$ process where an ambit set is incorporated to allow for non-separable spatio-temporal dependence. When $L$ has finite second moments, integrability conditions similar to (\ref{eqn:simicon}) are required for the process to be well-defined. Just as how the purely temporal case does not result in temporal long-range dependence (see Remark 8 in \cite{FK2007}), this construction does not result in temporal long-range dependence since if the process is well-defined:
\begin{equation*}
\int_{0}^{\infty} \Cov(Y_{t}(\mathbf{x}), Y_{t+ \tau}(\mathbf{x}))\mathrm{d}\tau = \sum_{i = 1}^{p}\frac{\Var(L')}{\left(-\eta_{i}\right)\prod\limits_{\stackrel{1\leq m \leq p}{m\neq i}} (\eta_{i} - \eta_{m})} \int_{A_{t}(\mathbf{x})} \exp(2\eta_{i}(t-s)) \mathrm{d}\bm{\xi} \mathrm{d}s < \infty. 
\end{equation*}
This shows that the ability to model temporal long-range dependence comes from the choice of $f(\lambda)$, the probability density of the rate parameter, rather than on the choice of the ambit set.

\section{Simulation and compound Poisson MSTOU processes} \label{sec:Sim}

In this section, we develop a simulation algorithm for MSTOU processes which involve the compound Poisson L\'evy basis mentioned in Example \ref{eg:canon1dprop}. This is a generalisation and combination of the simulation algorithms in \cite{BM2017} and \cite{FK2007} for compound Poisson continuous auto-regressive moving average (CARMA) random fields on $\mathbb{R}^{d}$  and positive shot noise processes on $\mathbb{R}$ respectively. As such, the processes that we simulate can be seen as spatio-temporal shot-noise processes.
\\
We call $Y_{t}(\mathbf{x})$ a \textit{spatio-temporal shot noise process} if:
\begin{equation}
Y_{t}(\mathbf{x}) = \int_{0}^{\infty}\int_{A_{t}(\mathbf{x})} \exp(-\lambda(t-s)) L(\mathrm{d}\boldsymbol{\xi}, \mathrm{d}s, \mathrm{d}\lambda) = \sum_{k = -\infty}^{\infty} e^{-\lambda_{k}\left(t - \Gamma_{k}^{(2)}\right)} Z_{k}\mathbf{1}_{A_{t}(\mathbf{x})}(\Gamma_{k}).
\label{eqn:psn}
\end{equation}
Similar to the approach in \cite{BM2017} for CARMA random fields, we simulate our MSTOU process over a bounded space-time region $D$. This means that we approximate (\ref{eqn:psn}) by:
\begin{equation}
Z_{t}(\mathbf{x}) = \sum_{k = 1}^{M} e^{-\lambda_{k}\left(t - \Gamma_{k}^{(2)}\right)} Z_{k}\mathbf{1}_{A_{t}(\mathbf{x})}(\Gamma_{k}).
\label{eqn:psnapprox}
\end{equation}
Here, $M$ denotes the number of jumps in $D$ and $M\sim$ Poisson$(\mu\Leb(D))$. The $M$ jump locations are uniformly distributed about $D$. 
\\
Let $\{(\mathbf{x}_{i}, t_{j}) : i = 1, \dots, n \text{ and } j = 1, \dots, m\}$ denote a spatio-temporal grid in $D$. Algorithm \ref{alg:psn} shows how we can simulate our MSTOU process in the case of one-dimensional space based on this approximation when $f$ is the Gamma$(\alpha, \beta)$ density and $Z_{k}\sim \Gamma(\alpha_{Z}, \beta_{Z})$. To extend this to $d$-dimensional space, we need to use arrays to store the process values instead of a matrix and extend the `for' loop operations. 
\\
To reduce kernel truncation error, we should pad the boundaries of $D$ and implement our algorithm on an extended domain. Since our ambit set $A_{t}(\mathbf{x})$ does not include times after $t$, we only need to pad the temporal domain from the past. The extent of the padding and its effectiveness depends on the smallest generated value of $\lambda_{k}$: the smaller the minimum value of $\lambda_{k}$, the wider our extended domain needs to be. A good indicator to monitor would be $\exp(-\lambda_{\min}T_{pad})$ where $\lambda_{min}>0$ and $T_{pad}>0$ denote the minimum $\lambda_{k}$ and the time padding respectively. 
\\
We note that unlike the discrete convolution algorithms for STOU processes in \cite{NV2016}, we do not have kernel discretisation error since we only evaluate the exponential kernel at jump locations. All our simulation error is due to the kernel truncation imposed by the extent of the padding. There is also no ambit set approximation error except that related to the kernel truncation. This means that we are free to choose a grid size for our simulation domain based on our needs. If we want to simulate processes with longer memory than that corresponding to exponential correlations and estimate from our results, we require data over large areas. Thus, we might want to choose large grid sizes in order to cover a large area in a reasonable computational time. On the other hand, if we are not interested in estimating our simulated data, we can choose finer simulation grids over smaller domains.
\\
Plot (a) in Figure \ref{fig:psn} illustrates the simulated jumps in the extended domain of $[-40, 140]\times[-40, 100]$ for a canonical spatio-temporal shot noise process, i.e.~the case in one-dimensional space with $A_{t}(x) = \{(\xi, s): |x-\xi|<c|t-s|\}$. Here, we have padded the original simulation domain of $[0, 100]\times[0, 100]$ by $40$ units in both spatial directions and in the direction towards the past. The rate parameter of the underlying Poisson process is $\mu = 0.2$ and results in $4995$ jumps over the extended domain. In this case, we have $\exp(-\lambda_{min}T_{pad}) = 0.000725$ so the kernel truncation error should be quite small. The other model parameters are $\alpha = \alpha_{z} = 3$ and $\beta = \beta_{z} = 1$. In order to cover $[0, 100]\times[0, 100]$ in a reasonable amount of time, we choose a grid size of $\triangle = 0.5$. Plots (b) and (c) show the heat and perspective plots of the corresponding simulation. It is interesting to see that the linear edges of the ambit set are also reflected in the heat plot  

\clearpage

\begin{algorithm}[tbp]
\caption{Simulating a space-time positive shot noise process over a bounded domain with one-dimensional space.}\label{alg:psn}
\begin{algorithmic}[1]
\State $M \gets rpois(1, \mu\Leb(D))$ \Comment{Generate the number of jumps from a Poisson distribution.}
\State $\Gamma \gets runif(M, D)$ \Comment{Generate $M$ spatio-temporal jump locations from a Uniform distribution over $D$.}
\State $\Lambda \gets rgamma(M, \alpha, \beta)$ \Comment{Generate $M$ rate parameters from a Gamma$(\alpha, \beta)$ distribution.}
\State $Z \gets rgamma(M, \alpha_{Z}, \beta_{Z})$ \Comment{Generate $M$ jump values from a Gamma$(\alpha_{Z}, \beta_{Z})$ distribution.}
\State $Y \gets matrix(0, n, m)$ \Comment{Create a storage matrix for our simulated data.}
\For{$i = 1, \dots, n$} 
\For {$j = 1, \dots, m$}
\State $y \gets 0$ \Comment{Create a variable for the $i-j$th entry of $Y$.}
\For{$k = 1, \dots, M$}
\If{$\Gamma_{k} \in A_{t_{j}}(\mathbf{x}_{j})$}
\State $y \gets y + e^{-\lambda_{k}(t_{j} - \Gamma_{k}^{(2)})}Z_{k}$ \Comment{Add contribution of $M$th jump to the process value if it lies in $A_{t}(\mathbf{x})$.}
\EndIf
\EndFor
\State $Y[i, j] \gets y$ \Comment{Store the final process value for the $i-j$th location.}
\EndFor
\EndFor
\end{algorithmic}
\end{algorithm}

\begin{figure}[tbp]
\centering
\caption{Simulating a canonical spatio-temporal shot noise process with $c = 1$: (a) jumps in extended domain $[-40, 140]\times[-40, 100]$ (in red: jumps in the simulation domain); (b) heat plot over $[0, 100] \times [0, 100]$ where the values are generated with a grid spacing of $0.5$ units; (c) perspective plot of the same realisation. The parameter values for the Gamma jump and rate parameter distributions are $\alpha = \alpha_{z} = 3$ and $\beta = \beta_{z} = 1$, while the rate parameter for the underlying Poisson process is $\mu = 0.2$.}
\label{fig:psn}
\includegraphics[width = 6in, height = 2.4in, trim = 1in 0.4in 1in 0in]{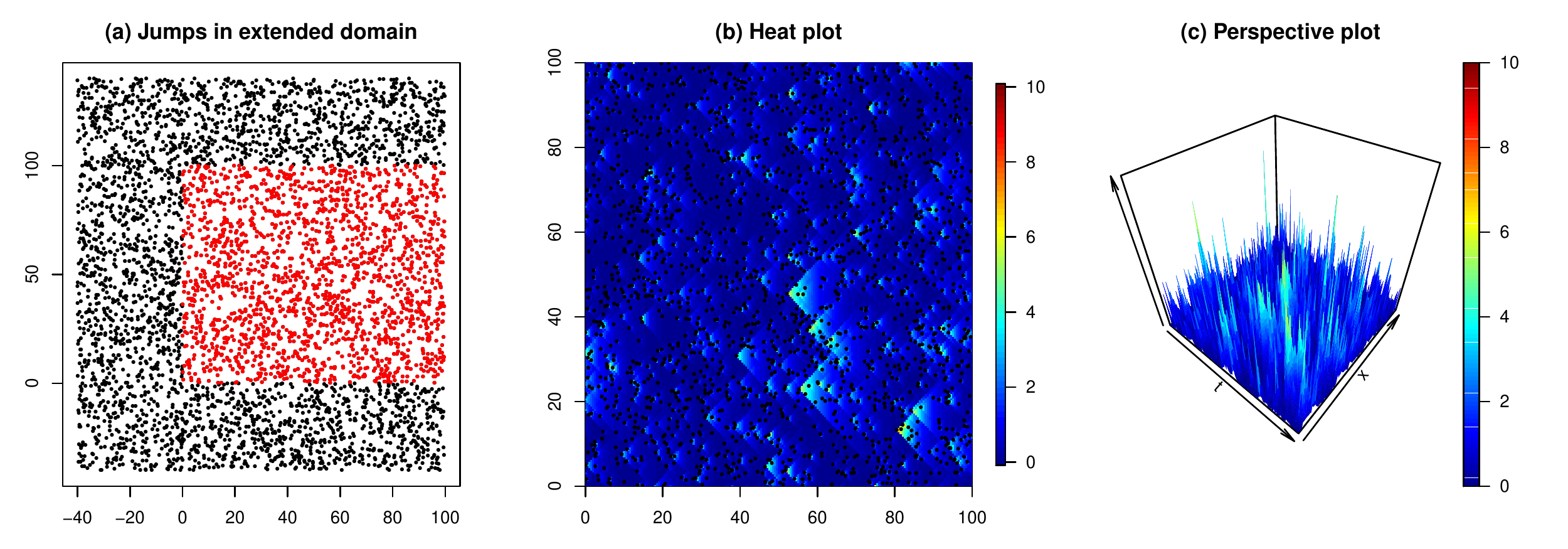}
\end{figure}

\begin{figure}[tbp]
\centering
\caption{Heat plots over $[0, 25]\times[0, 25]$ of: (a) the MSTOU process and (b) the corresponding STOU process with rate parameter $\int_{0}^{\infty}\lambda f(\lambda) \mathrm{d}\lambda = \alpha/\beta = 3$. The black dots denote the positions of the jumps in space-time.}
\label{fig:psnstou}
\includegraphics[width = 4.6in, height = 2.6in, trim = 1in 0.4in 0.4in 0in]{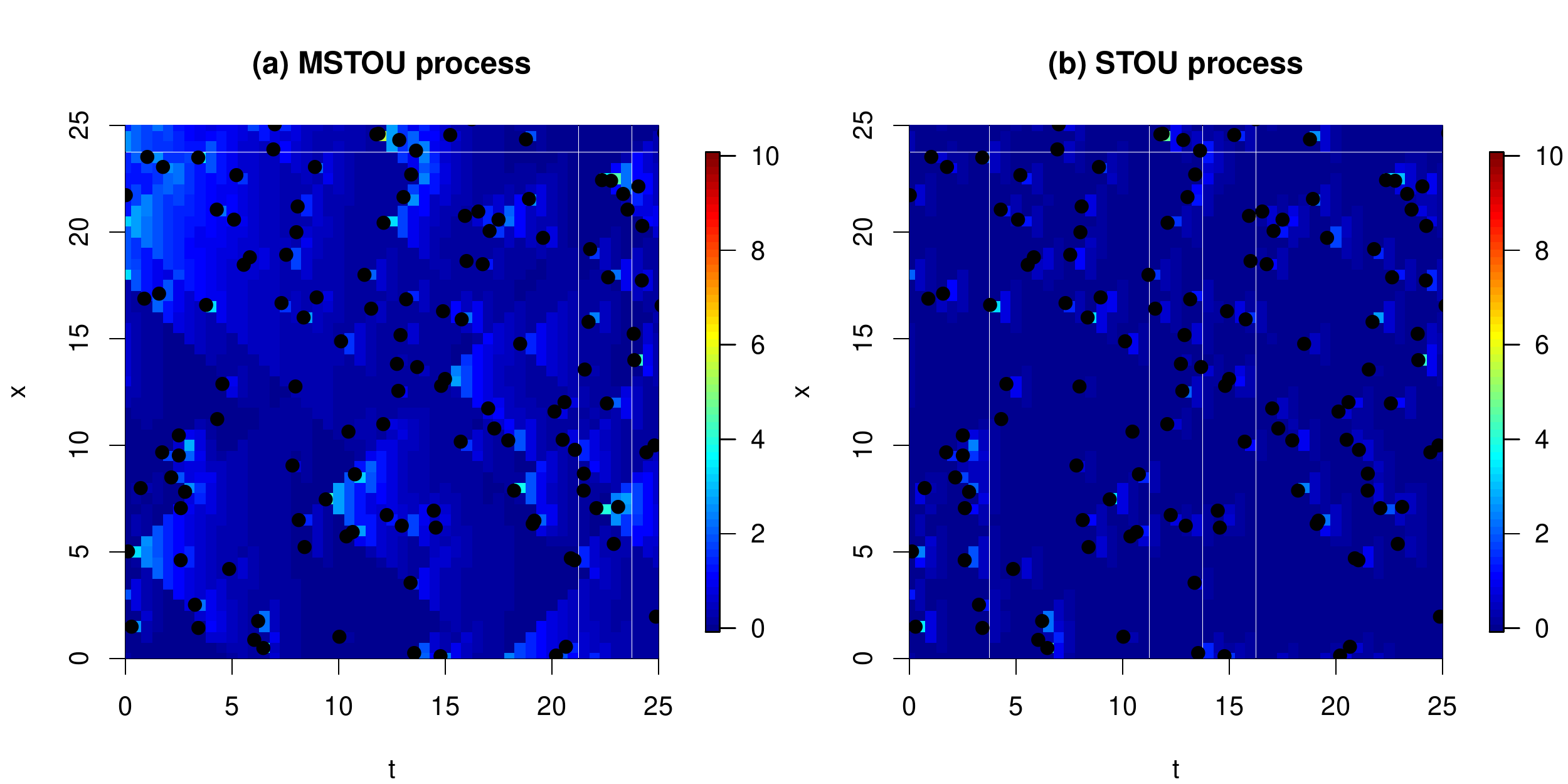}
\end{figure}

\clearpage

because they determine which jumps affect the process value at a particular location.
\\
For a better understanding of how an MSTOU process works, we zoom into our heat plot and compare our process to a STOU process with the same parameter settings but with its rate parameter set to $\int_{0}^{\infty}\lambda f(\lambda) \mathrm{d}\lambda $. From Figure \ref{fig:psnstou}, we see that the jumps in both processes typically occur at the Poisson jump locations which are denoted by the black dots. However, while the values decay at the same rate for the STOU process in Plot (b), the values decay at varying rates for each jump in the MSTOU process. The lower rate parameters lead to larger clusters which is consistent with the long memory of the process.  
\\
Figure \ref{fig:TSACF2} shows the series and autocorrelation (ACF) plots for the simulated process at a fixed spatial location ($x = 100$) and a fixed temporal location ($t = 100$). The black curves in the ACF plots denote the theoretical correlations. We see that our simulation replicates the dependence structure of the process quite well although there is some discrepancies at higher lags which are possibly due to simulation error, random variation and the lower amount of data for estimation. 
\\
For the $g$-class, we calculate an upper bound for the mean squared error (MSE) as follows:
\vspace{2mm}
\begin{thm} \label{thm:MSE}
Let $\{Y_{t}(\mathbf{x})\}_{\mathbf{x}\in\mathbb{R}^{d}, t\in\mathbb{R}}$ be a spatio-temporal shot noise process in the $g$-class and let $Z_{t}(\mathbf{x})$ be its simulation approximation given by (\ref{eqn:psnapprox}). Then:
\begin{align}
\mathbb{E}\left[\left(Y_{t}(\mathbf{x}) - Z_{t}(\mathbf{x})\right)^{2}\right] &\leq \frac{\pi^{d/2}\left(\Var(L') + \mathbb{E}\left[L'\right]^{2}\right)}{\Gamma\left(\frac{d}{2} + 1\right)} \int_{0}^{\infty}\left(\int_{\min(T_{pad}, g^{-1}(X_{pad}))}^{\infty}  g^{d}(w)e^{-2\lambda w}\mathrm{d}w\right) f(\lambda) \mathrm{d}\lambda, \label{eqn:MSE}
\end{align}
where $X_{pad}>0$ is the space padding in the simulation.
\end{thm}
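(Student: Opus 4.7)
The plan is to write the approximation error as a stochastic integral over the portion of the ambit set missed by the padded simulation domain, bound that region by a cleaner super-set governed by $w^{\ast} := \min(T_{pad}, g^{-1}(X_{pad}))$, and then evaluate the second moment using the compound-Poisson structure of $L$ together with the $g$-class ball geometry.

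First I would observe that $Z_{t}(\mathbf{x})$ sums precisely those jumps $(\Gamma_{k}, \lambda_{k}, Z_{k})$ with $\Gamma_{k}$ inside the extended simulation domain $D$, so
\begin{equation*}
Y_{t}(\mathbf{x}) - Z_{t}(\mathbf{x}) = \int_{R} e^{-\lambda(t-s)}\, L(\mathrm{d}\bm{\xi}, \mathrm{d}s, \mathrm{d}\lambda),
\end{equation*}
where $R = (A_{t}(\mathbf{x}) \setminus D) \times (0,\infty)$. Because $g$ is non-decreasing, any $(\bm{\xi}, s) \in A_{t}(\mathbf{x}) \setminus D$ either lies at temporal distance $t-s > T_{pad}$ or at spatial distance $|\mathbf{x} - \bm{\xi}| > X_{pad}$; in the latter case $g(t-s) \geq |\mathbf{x}-\bm{\xi}| > X_{pad}$, forcing $t-s > g^{-1}(X_{pad})$. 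Hence $R \subseteq R^{\ast} := \bigl(A_{t}(\mathbf{x}) \cap \{(\bm{\xi},s): t-s \geq w^{\ast}\}\bigr) \times (0,\infty)$, and monotonicity of $\mathbb{E}\bigl[(\int_{B} h\, \mathrm{d}L)^{2}\bigr]$ in $B$ allows me to replace $R$ by $R^{\ast}$ at the cost of an inequality.

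Second I would expand the second moment over $R^{\ast}$ via $\mathbb{E}[X^{2}] = \mathrm{Var}(X) + (\mathbb{E}X)^{2}$ and apply the mean/covariance formulas behind Corollary \ref{Cor:meancov} to the integrand $e^{-\lambda(t-s)}\mathbf{1}_{R^{\ast}}$:
\begin{equation*}
\mathbb{E}\bigl[(Y-Z)^{2}\bigr] \leq \mathrm{Var}(L')\, I_{2} + (\mathbb{E}L')^{2}\, I_{1}^{2}, \quad I_{k} := \int_{R^{\ast}} e^{-k\lambda(t-s)}\, \mathrm{d}\bm{\xi}\,\mathrm{d}s\, f(\lambda)\,\mathrm{d}\lambda.
\end{equation*}
Using the $g$-class geometry, for fixed $(s,\lambda)$ the spatial cross-section of $A_{t}(\mathbf{x})$ is a $d$-ball of radius $g(t-s)$ with Lebesgue volume $\pi^{d/2}/\Gamma(d/2+1)\cdot g^{d}(t-s)$. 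Substituting $w = t-s$ turns both $I_{1}$ and $I_{2}$ into $\pi^{d/2}/\Gamma(d/2+1)$ times an integral of the form $\int_{0}^{\infty}\int_{w^{\ast}}^{\infty} g^{d}(w)\, e^{-k\lambda w}\,\mathrm{d}w\, f(\lambda)\,\mathrm{d}\lambda$, so $I_{2}$ already matches the inner double integral on the right-hand side of \eqref{eqn:MSE}.

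The final step is to merge the two contributions under the single coefficient $\mathrm{Var}(L') + (\mathbb{E}L')^{2} = \mathbb{E}[(L')^{2}]$. This is the main obstacle, since the naive Cauchy--Schwarz estimate $(\int h\,\mathrm{d}\tilde{c})^{2} \leq \tilde{c}(\mathrm{supp}(h)) \int h^{2}\,\mathrm{d}\tilde{c}$ fails: the control-measure mass of $R^{\ast}$ is generally infinite. A finer estimate is required, exploiting the decay $e^{-\lambda w} \leq e^{-\lambda w^{\ast}}$ valid on $R^{\ast}$ together with the ball-volume form of the spatial cross-section to bound $I_{1}^{2}$ by $I_{2}$. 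Concretely, I would try the decomposition $g^{d/2}(w)\cdot g^{d/2}(w) e^{-\lambda w}$ inside a Cauchy--Schwarz step, and, if necessary, use Jensen's inequality in the $\lambda$-variable (recalling that $f$ is a probability density) to keep all extraneous constants controlled. Once this $I_{1}^{2}\le I_{2}$ comparison is established, the two terms collapse onto $I_{2}$ and the stated bound \eqref{eqn:MSE} follows.
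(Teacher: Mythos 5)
Your set-up reproduces the paper's argument: writing $Y_{t}(\mathbf{x})-Z_{t}(\mathbf{x})$ as the stochastic integral of $e^{-\lambda(t-s)}$ over the missed part of the ambit set, enlarging that region to $R^{\ast}=\bigl(A_{t}(\mathbf{x})\cap\{t-s\geq w^{\ast}\}\bigr)\times(0,\infty)$ with $w^{\ast}=\min(T_{pad},g^{-1}(X_{pad}))$ via the monotonicity of $g$, and then using the $g$-class geometry (spatial cross-sections are $d$-balls of radius $g(t-s)$, volume $\pi^{d/2}g^{d}(t-s)/\Gamma(\tfrac{d}{2}+1)$) together with the substitution $w=t-s$. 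Up to and including the identification of $I_{2}$ with the double integral in (\ref{eqn:MSE}), this is exactly the paper's route and is sound (the monotonicity of the second moment in the integration region is fine here because the kernel is nonnegative, so the cross term has the sign of $\mathbb{E}[L']^{2}$).

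The gap is the final step, which you yourself flag as the "main obstacle": you never prove $I_{1}^{2}\leq I_{2}$, and in fact this comparison is false in general, so the proposed Cauchy--Schwarz/Jensen repair cannot work unconditionally. Since $0<e^{-\lambda(t-s)}\leq 1$ on $R^{\ast}$, one always has $I_{2}\leq I_{1}$; hence whenever $I_{1}\geq 1$ (which happens for modest padding, e.g. $g(w)=cw$, $d=1$ and a Gamma density for $\lambda$, where $I_{1}=\int_{0}^{\infty}2c\,e^{-\lambda w^{\ast}}(\lambda w^{\ast}+1)\lambda^{-2}f(\lambda)\,\mathrm{d}\lambda$ can be made arbitrarily large) one gets $I_{1}^{2}\geq I_{1}\geq I_{2}$, so no pointwise decay estimate or convexity argument can reverse the inequality. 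Consequently your decomposition $\mathbb{E}[(Y-Z)^{2}]\leq\Var(L')I_{2}+\mathbb{E}[L']^{2}I_{1}^{2}$ does not collapse to the single-constant bound (\ref{eqn:MSE}) without extra hypotheses (it does immediately if $\mathbb{E}[L']=0$, as in the Gaussian-approximation example). Note also that this is precisely where you diverge from the paper: its proof passes in one line from the second moment of the truncation error to $(\Var(L')+\mathbb{E}[L']^{2})\int_{R^{\ast}}e^{-2\lambda(t-s)}\,\mathrm{d}\bm{\xi}\,\mathrm{d}s\,f(\lambda)\,\mathrm{d}\lambda$, i.e. it accounts for the squared-mean contribution with the weight $\int h^{2}$ rather than $(\int h)^{2}$; your more careful expansion is the correct one and shows that this step requires justification, but as written your proposal does not supply it and therefore does not establish the stated bound.
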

\vspace{2mm}
The MSE upper bound (\ref{eqn:MSE}) shrinks to zero as the padding extents $T_{pad}, X_{pad} \rightarrow \infty$ as we now show for a particular case:
\vspace{2mm}
\begin{Eg}
Consider the case when $g(|t-s|) = c|t-s|$ for $c>0$ and $d = 1$. Let $f(\lambda)$ be the Gamma$(\alpha, \beta)$ density with $\alpha>2$ and $\beta>0$. Then, the upper bound on the MSE is given by:
\begin{align}
(\ref{eqn:MSE}) &=  c\left(\Var(L') + \mathbb{E}\left[L'\right]^{2}\right) \int_{0}^{\infty}\frac{1}{\lambda}\left(\int_{\min(T_{pad}, X_{pad}/c)}^{\infty}  2\lambda we^{-2\lambda w}\mathrm{d}w\right) f(\lambda) \mathrm{d}\lambda \nonumber \\
&= \frac{c\left(\Var(L') + \mathbb{E}\left[L'\right]^{2}\right)}{2} \int_{0}^{\infty}\frac{1}{\lambda^{2}} \left(\int_{2\lambda\min(T_{pad}, X_{pad}/c)}^{\infty} u e^{-u}\mathrm{d}u\right) f(\lambda) \mathrm{d}\lambda \text{ where } u = 2\lambda w, \nonumber \\
&=  \frac{c\left(\Var(L') + \mathbb{E}\left[L'\right]^{2}\right)}{2} \int_{0}^{\infty}\frac{1}{\lambda^{2}} \left(B\lambda +1 \right)e^{-B\lambda} f(\lambda) \mathrm{d}\lambda \text{ where } B = 2\min(T_{pad}, X_{pad}/c), \nonumber \\
&= \frac{c\beta^{\alpha}\left(\Var(L') + \mathbb{E}\left[L'\right]^{2}\right)}{2(\alpha - 1)}\left[\frac{B}{\left(\beta + B\right)^{\alpha -1}} + \frac{1}{(\alpha - 2)\left(\beta + B\right)^{\alpha - 2}} \right]. \label{eqn:MSEc}
\end{align}
From (\ref{eqn:MSEc}), we see that the MSE upper bound converges to zero as $T_{pad}$ and $X_{pad}$ increases. In addition, the rate of convergence increases as $\alpha$ increases. This is in line with the fact that $f(\lambda)$ places more probability weight on larger $\lambda$ values and large $\lambda$ values lead to lower kernel truncation error.
\end{Eg}
\vspace{2mm}

As mentioned in Remark 1 of \cite{BM2017} for the simulation of compound Poisson CARMA random fields, we can approximate the first and second moments of other seed distributions by varying the rate of the compound Poisson L\'evy seed and the jump distribution. For example, if we allow for positive and negative jumps so that the mean jump size is zero and $\mathbb{E}\left[Z_{k}^{2}\right] = \sigma^{2}/\mu$ for $\mu$ large, we obtain an approximation of a Gaussian L\'evy seed with mean zero and variance $\sigma^{2}$. More work needs to be done to see what kind of error is incurred by this additional approximation.
\\
In Figure \ref{fig:psngauapprox}, we set $\mu = 40$ and use a Gaussian jump distribution with mean zero and $\sigma^{2} = \alpha/\beta = 3$ so that we have the same L\'evy seed variance as in the previous case. By virtue of the higher rate parameter, we have more jumps in the same simulation domain (as denoted by the black dots) of smaller size due to the smaller standard deviation. As a result, we get an approximation of the continuous Gaussian L\'evy basis.

\clearpage

\begin{figure}[tbp]
\centering
\caption{Series and autocorrelation function (ACF) plots for: (a)-(b) $Y_{t}(100)$ and (c)-(d) $Y_{100}(x)$. Each time lag is a unit. In the ACF plots, the black curves represent the theoretical ACFs.}
\label{fig:TSACF2}
\includegraphics[width = 4in, height = 4in, trim = 0.2in 0.2in 0.2in 0in]{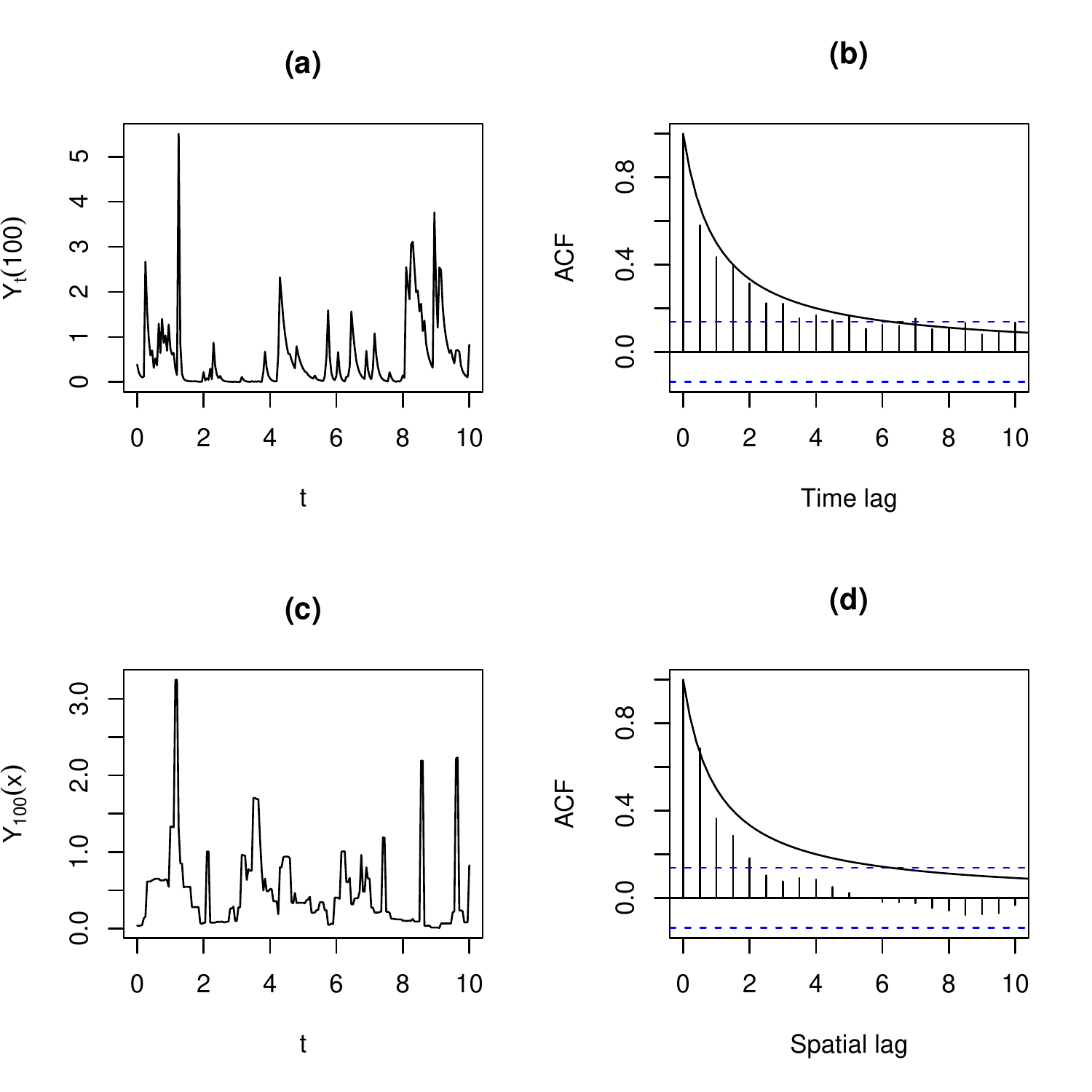}
\end{figure}

\begin{figure}[tbp]
\centering
\caption{Gaussian L\'evy seed approximation: (a) jumps in the extended domain $[-40, 41]\times[-40, 1]$ (in red: jumps in the simulation domain); (b) heat plot over $[0, 1]\times[0, 1]$ for one simulation from a canonical spatio-temporal shot noise process; and (c) the corresponding perspective plot. Here, the grid size is set to $0.01$ units, the rate parameter of the underlying Poisson process is $\mu = 40$ and the jumps are normally distributed jumps with mean zero and standard deviation $\sqrt{\frac{\alpha}{\beta\mu}} = \sqrt{3/10}$; These parameter settings mean that the L\'evy seed variance is equal to that corresponding to the Gamma distributed jumps in Figure \ref{fig:psn}, i.e.~$\alpha/\beta$.}
\label{fig:psngauapprox}
\includegraphics[width = 6in, height = 2.4in, trim = 1in 0.4in 0.4in 0in]{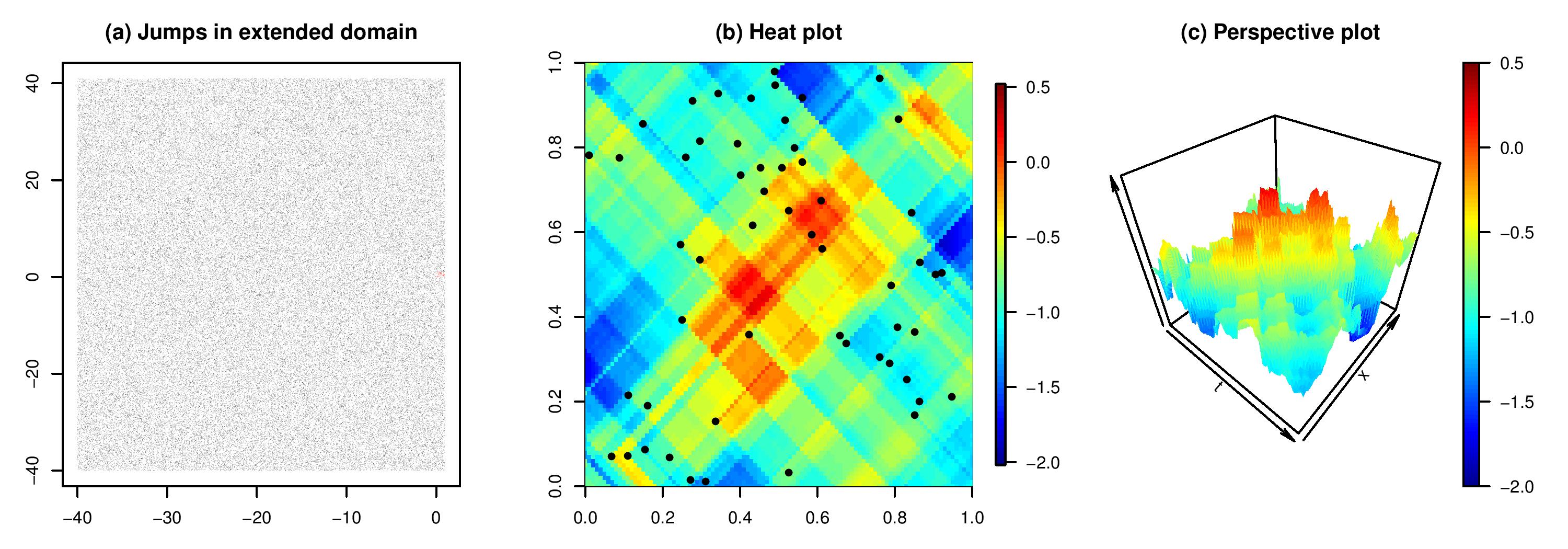}
\end{figure}

\clearpage

\begin{Rem}
If we are using the simulation algorithm as an approximation for MSTOU processes driven by L\'evy bases other than a compound Poisson one, the approach is related to but not exactly the same as the approximation of infinite activity L\'evy processes by a compound Poisson process in \cite{CT2004} since the latter requires a drift term. In the finite variation case (e.g.~the IG and Gamma basis), the L\'evy-It\^o decomposition suggests that we can simulate the process as a sum of a drift term and a compound Poisson process. When the L\'evy density $\nu(z)$ exists, the drift term includes the expectation of jumps less than $\epsilon>0$ which is given by $\int_{\epsilon}^{\infty} z\nu(z) \mathrm{d}z$ while the compound Poisson process has intensity $U(\epsilon) = \int_{\epsilon}^{\infty} v(z) \mathrm{d}z$ and jump size distribution $p^{\epsilon}(z) = \frac{\nu(z)\mathbf{1}_{z\geq \epsilon}}{U(\epsilon)}$. As shown in Proposition 6.1 of \cite{CT2004}, the error incurred by this approximation can be expressed in terms of $\epsilon$. 
\end{Rem}
\vspace{2mm}
\Rem{As seen from Algorithm \ref{alg:psn}, the expected number of iterations required to generate a data set is $\mu\Leb(D)\times n \times m$. This means that the speed of the simulation algorithm depends on the rate parameter of the Poisson process, the extent of the padding, the space-time region we want to cover and the number of simulation grid points. While the first three  parameters determine the number of jumps observed and hence the number of additions required for each process value, the latter determines the number of process values to be generated. On a PC with an Intel$^{\circledR}$ Core\texttrademark i7-3770 CPU Processor @ 3.40GHz, 8GB of RAM and Windows 8.1 64-bit, the data for Figure \ref{fig:psn} took about eight minutes to generate while that for Figure \ref{fig:psngauapprox} took about an hour.}

\section{Two-step iterated GMM estimation} \label{sec:Infer}

In this section, we apply the two-step iterated GMM to MSTOU processes. As mentioned in Section 3 of \cite{STW2015} for supOU processes, this is a semi-parametric estimation method since we conduct inference based on second order moments. However, if we assume a particular distribution for the L\'evy seed $L'$ which is characterised by two parameters, we can estimate these parameters directly.

\subsection{The method}\label{sec:GMM}

For illustrative purposes, we focus on the case in Example \ref{eg:canon1d}. We are interested in estimating $\bm{\beta} = (\alpha, \beta, c, \mathbb{E}\left[L' \right], \Var\left(L' \right)) \in\Theta$. Suppose that we have data on an $N\times N$ space-time grid with origin $(x_{0}, t_{0})$ and grid size $\triangle>0$, we define the vector:

\begin{equation*}
Y_{t}(x)^{(m)} := (Y_{t}(x), Y_{t}(x + \triangle), \dots, Y_{t}(x + m\triangle),  \dots, Y_{t+ \triangle}(x), \dots, Y_{t+ m\triangle}(x)),
\end{equation*}

for $t \in \{t_{0}, \dots, t_{0} + (N-m)\triangle\}$ and $x \in \{x_{0}, \dots, x_{0} + (N-m)\triangle\}$. Next, we construct the following moment function:
\begin{align}
f_{Y}(Y_{t}(x)^{(m)}, \bm{\beta}) &= \begin{pmatrix}
f_{\mathbb{E}}(Y_{t}(x)^{(m)}, \bm{\beta}) \\
f_{\Var}(Y_{t}(x)^{(m)}, \bm{\beta}) \\
f_{X, 1}(Y_{t}(x)^{(m)}, \bm{\beta}) \\
\vdots \\
f_{X, m}(Y_{t}(x)^{(m)}, \bm{\beta}) \\
f_{T, 1}(Y_{t}(x)^{(m)}, \bm{\beta}) \\
\vdots \\
f_{T, m}(Y_{t}(x)^{(m)}, \bm{\beta})
\end{pmatrix}, \label{eqn:mcon}\\
\text{where } f_{\mathbb{E}}(Y_{t}(x)^{(m)}, \bm{\beta}) &= Y_{t}(x) - \frac{2c \beta^{2}\mathbb{E}\left[L' \right]}{(\alpha - 2)(\alpha - 1)} \nonumber\\
f_{\Var}(Y_{t}(x)^{(m)}, \bm{\beta}) &= Y_{t}(x)^2 - \frac{c\beta^{2} \Var\left(L' \right)}{2(\alpha - 2)(\alpha - 1)} - \left(\frac{2c \beta^{2}\mathbb{E}\left[L' \right]}{(\alpha - 2)(\alpha - 1)} \right)^{2}  \nonumber \\
f_{X, h}(Y_{t}(x)^{(m)}, \bm{\beta}) &= Y_{t}(x)Y_{t}(x + h\triangle) -  \frac{c\beta^{\alpha} \Var\left(L' \right)}{2(\beta+ h\triangle/c)^{\alpha - 2}(\alpha - 2)(\alpha - 1)}  - \left(\frac{2c \beta^{2}\mathbb{E}\left[L' \right]}{(\alpha - 2)(\alpha - 1)} \right)^{2}  \nonumber\\
f_{T, h}(Y_{t}(x)^{(m)}, \bm{\beta}) &= Y_{t}(x)Y_{t+ h\triangle}(x) - \frac{c\beta^{\alpha} \Var\left(L' \right)}{2(\beta+h\triangle)^{\alpha - 2}(\alpha - 2)(\alpha - 1)}  - \left(\frac{2c \beta^{2}\mathbb{E}\left[L' \right]}{(\alpha - 2)(\alpha - 1)} \right)^{2},  \nonumber
\end{align}
where $h = 1, \dots, m$ and $m \geq 2$ is an integer. 
\\
The GMM estimator of $\bm{\beta}$ is given by:
\begin{align*}
\hat{\bm{\beta}}_{N} &= \argmin_{\bm{\beta}} \left\{g_{N, m}(Y, \bm{\beta})\right\}'W_{N}\left\{g_{N, m}(Y, \bm{\beta})\right\},  \\
\text{where } g_{N, m}(Y, \bm{\beta}) &= \frac{1}{(N-m)^2}\sum_{i = 1}^{N-m}\sum_{j = 1}^{N-m}f_{Y}(Y_{t_{0} + i\triangle}(x_{0} +j\triangle)^{(m)}, \bm{\beta}). 
\end{align*}
In the first step of the GMM procedure, we set $W_{N} = I$, the $2(1+m)\times 2(1+m)$ identity matrix to find $\hat{\bm{\beta}}_{1, N}$, the first step estimator. In the second step of the GMM procedure, we set $W_{N}$ to be $\widehat{S}_{N}^{-1}$ where:
\begin{equation*}
\widehat{S}_{N} := \frac{1}{(N-m)^{2}}\sum_{i = 1}^{N-m}\sum_{j = 1}^{N-m}f_{Y}(Y_{t_{0} + i\triangle}(x_{0} +j\triangle)^{(m)}, \hat{\bm{\beta}}_{1, N})f_{Y}(Y_{t_{0} + i\triangle}(x_{0} +j\triangle)^{(m)}, \hat{\bm{\beta}}_{1, N})'.
\end{equation*}
We note that $\widehat{S}_{N}$ is an estimator for $\bar{V}_{N} = N\Var\left( g_{N, m}(Y, \bm{\beta}_{0})\right)$. Improvements can be made by considering the autocorrelation effects. 
\vspace{2mm}
\begin{thm}\label{thm:identifiability}
Let $Y_{t}(x)$ be the MSTOU process defined in Example \ref{eg:canon1d}, $m\geq 2$ be a fixed integer and $f_{Y}(Y_{t}(x)^{(m)}, \bm{\beta})$ be as defined in (\ref{eqn:mcon}). Then, the true parameter vector $\bm{\beta}_{0}$ is identifiable, i.e.~$\mathbb{E}\left[f_{Y}(Y_{t}(x)^{(m)}, \bm{\beta})\right] = \mathbf{0}$ for all $(x, t)$ if and only if $\bm{\beta} = \bm{\beta}_{0}$. 
\end{thm}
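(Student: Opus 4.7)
The $\Leftarrow$ direction is immediate: substituting $\bm\beta = \bm\beta_0$ into each entry of $f_Y$ yields a zero-mean random variable by Corollary \ref{Cor:meancov} and the explicit expressions in Example \ref{eg:canon1d}. For the $\Rightarrow$ direction, the plan is to read off the five scalar parameters $(\alpha,\beta,c,\mathbb{E}[L'],\Var(L'))$ from the $2(1+m)\geq 6$ mean-zero equations in a specific order: first $\beta$, then $\alpha$, then $c$, and finally the two L\'evy-seed moments.

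Setting $\mathbb{E}[f_Y(\bm\beta)] = \mathbb{E}[f_Y(\bm\beta_0)] = \mathbf{0}$ gives equalities of the theoretical mean, variance, and (via the pairwise-product entries) the spatial and temporal autocovariances at lags $h\triangle$, computed at $\bm\beta$ and at $\bm\beta_0$. Writing $\gamma := \alpha - 2 > 0$ (integrability in Example \ref{eg:gcanon} forces $\alpha>2$ when $d=1$), dividing each temporal covariance equation by the variance equation cancels the nuisance factors $c$ and $\Var(L')$ and yields, for $h = 1, \ldots, m$,
\begin{equation*}
\bigl(1 + h\triangle/\beta\bigr)^{-\gamma} = \bigl(1 + h\triangle/\beta_0\bigr)^{-\gamma_0}.
\end{equation*}
Taking the ratio of the $h=1$ and $h=2$ equations eliminates $\gamma$ and $\gamma_0$ and reduces matters to the injectivity in $\beta>0$ of
\begin{equation*}
\Phi(\beta) := \frac{\log(1 + 2\triangle/\beta)}{\log(1 + \triangle/\beta)}.
\end{equation*}

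The main obstacle is proving that $\Phi$ is strictly monotone. I would change variables to $v = \triangle/\beta \in (0,\infty)$ and compute that the sign of $\Phi'(v)$ equals the sign of $g(v) := 2(1+v)\log(1+v) - (1+2v)\log(1+2v)$; since $g(0)=0$ and $g'(v) = 2\log\!\bigl((1+v)/(1+2v)\bigr) < 0$ for $v>0$, $g$ is strictly negative on $(0,\infty)$, hence $\Phi$ is strictly monotone. This forces $\beta = \beta_0$, after which the $h=1$ equation gives $\gamma = \gamma_0$, i.e.~$\alpha = \alpha_0$. Feeding $(\alpha,\beta)$ into the ratio of the $h=1$ spatial covariance and the variance equations gives $(1+\triangle/(c\beta))^{-\gamma} = (1+\triangle/(c_0\beta_0))^{-\gamma_0}$, from which $c=c_0$ by strict monotonicity of $c \mapsto (1 + \triangle/(c\beta))^{-\gamma}$ on $(0,\infty)$. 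Finally, with $(\alpha,\beta,c)$ fixed, the variance equation is linear in $\Var(L')$ with a nonzero coefficient and yields $\Var(L') = \Var(L_0')$, and then the mean equation yields $\mathbb{E}[L'] = \mathbb{E}[L_0']$. This closes the chain of identifications and establishes the theorem.
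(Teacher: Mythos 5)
Your proposal is correct and follows essentially the same route as the paper's proof, which identifies $\alpha$ and $\beta$ from the temporal correlations at two lags, then $c$ from the spatial correlation, and finally $\mathbb{E}[L']$ and $\Var(L')$ from the mean and variance. The only difference is that the paper delegates the details to Proposition 3.3 of \cite{STW2015}, whereas you explicitly verify the key injectivity step via the strict monotonicity of $\Phi(\beta)=\log(1+2\triangle/\beta)/\log(1+\triangle/\beta)$, which is a welcome (and correct) elaboration rather than a different method.
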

\vspace{2mm}
\begin{Rem}
To extend the GMM approach to $d>1$, we can add additional observations corresponding to other spatial directions to $Y_{t}(\mathbf{x})^{(m)}$ and adapt $f_{X, h}$ and $f_{T, h}$ accordingly.
\end{Rem}

\subsection{Simulation experiments}

GMM estimators are known to be consistent and asymptotically normal under certain assumptions \cite[]{Matyas1999}. For example, one typically requires that $W_{N}$ converges to a positive definite matrix and that a central limit theorem (CLT) holds for $f_{Y}(Y_{t}(x)^{(m)}, \bm{\beta})$. These conditions are hard to check in practice. In addition, little work has been done in establishing CLTs for general supOU processes, much less MSTOU processes. So far, it has been shown that CLTs hold for supOU processes when $f(\lambda)$ is a discrete probability distribution with finite support but may not hold under infinite support \cite[]{GLST2016}. It is likely that similar results hold for MSTOU processes but proving them is out of the scope of this paper. Instead, to illustrate our method and strengthen our conjectures about the asymptotic properties, we conduct simulation studies.
\\
We set the simulation domain to $D = [0, 100]\times[0, 100]$, the intensity of the underlying Poisson process to $\mu = 0.2$, the rate parameter of the Gamma distribution for $\lambda$ to $\beta = 1$, the shape parameter of the ambit set to $c = 1$ and use a $N(0, 15)$ jump distribution. The padding extents and grid size are chosen to be $X_{pad} = T_{pad} = 40$ and $\triangle = 0.5$ respectively. $100$ data sets are generated for the short-range dependence case with $\alpha = 5$ and the long-range dependence case with $\alpha = 3$. 
\\
To have a properly overidentified system and avoid high dimensional matrices, we choose $m = 3$ and conduct the two-step GMM estimation as laid out in Section \ref{sec:GMM}. The ``DEoptim'' function of the ``DEoptim'' R package was used to perform global optimisation over the parameter space $[2, 35]\times[0, 35]\times[0, 5]\times[-2.5, 2.5]\times[0, 15]$. Figure \ref{fig:GMMestout} shows box plots of the estimates for the short-range and long-range dependence scenarios in the top and bottom rows respectively. For a closer look at where the majority lie, we have omitted one, five and four outliers for $\hat{\alpha}$, $\hat{\beta}$ and $\hat{c}$ in long-range dependence setting. From the plots, we see that the true parameter values (denoted by the red horizontal lines) lie well within the range of the estimates. We also notice that when the data has short-range dependence, $\alpha$ is never estimated to be lower than $3$, the boundary value for long-range dependence (denoted by the dotted blue line). This ability to distinguish between the two forms of dependence is desirable in practice. However, we also note that it is one-sided since $\hat{\alpha}>3$ for many long-range dependent data sets. There is also some skewness and bias in the estimates which one might expect since we 

\clearpage

\begin{figure}[tbp]
\centering
\caption{Full data without extreme outliers: Box plots of GMM parameter estimates from $100$ simulated data sets. The top row corresponds to the case of short-range dependence ($\alpha = 5$) while the bottom row corresponds to long-range dependence ($\alpha = 3$). The red horizontal lines denote the true parameter vaues and the blue dotted line in Plot (a) denotes $\alpha = 3$, the boundary value for long-range dependence.}
\label{fig:GMMestout}
\includegraphics[width = 5.2in, height = 3.4in, trim = 0.2in 0.2in 0.2in 0in]{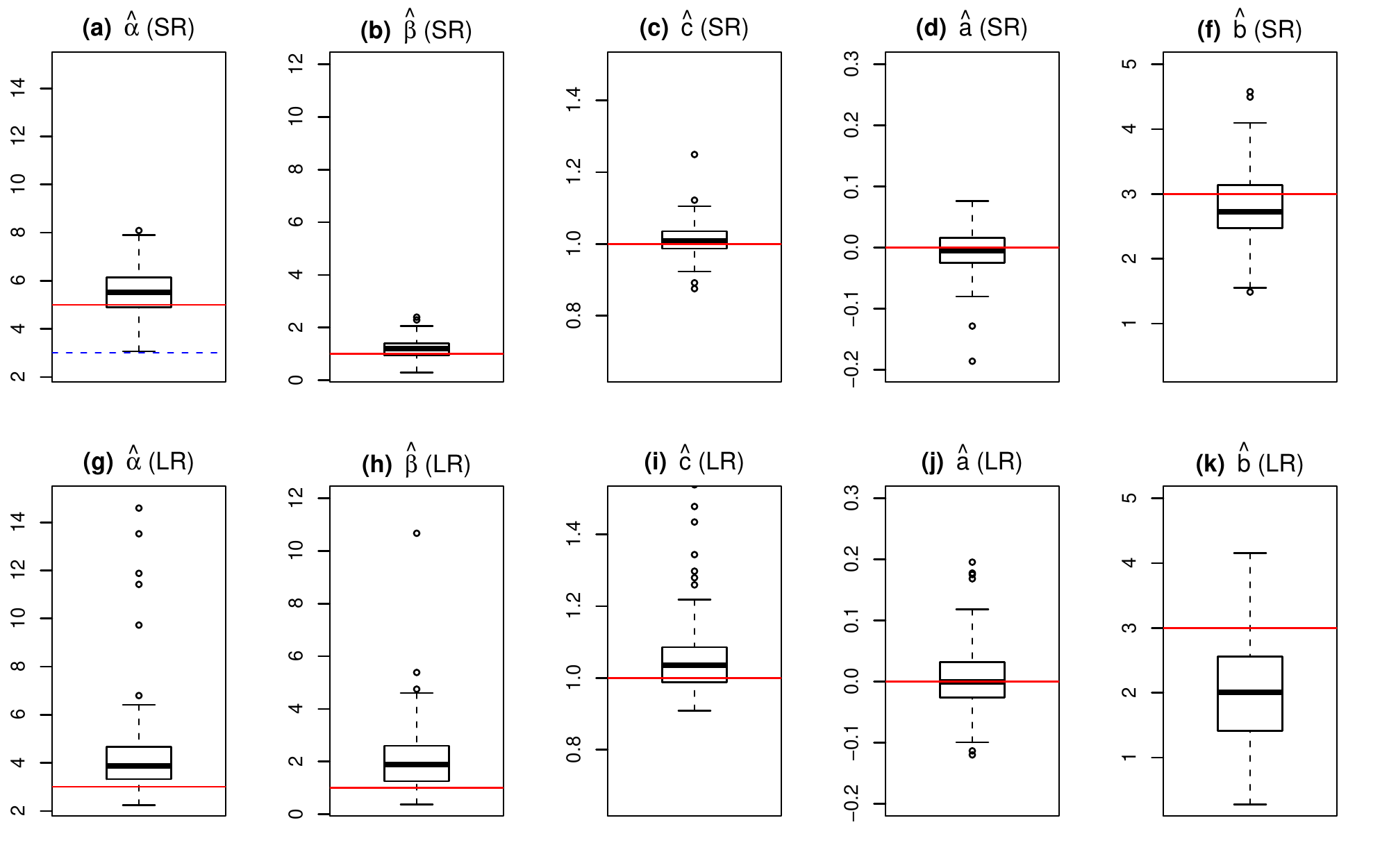}
\end{figure}

\begin{figure}[tbp]
\centering
\caption{Reduced data without extreme outliers: Box plots of GMM parameter estimates from $100$ simulated data sets. The top row corresponds to the case of short-range dependence ($\alpha = 5$) while the bottom row corresponds to long-range dependence ($\alpha = 3$). The red horizontal lines denote the true parameter vaues and the blue dotted line in Plot (a) denotes $\alpha = 3$, the boundary value for long-range dependence.}
\label{fig:GMMestredout}
\includegraphics[width = 5.2in, height = 3.4in, trim = 0.2in 0.2in 0.2in 0in]{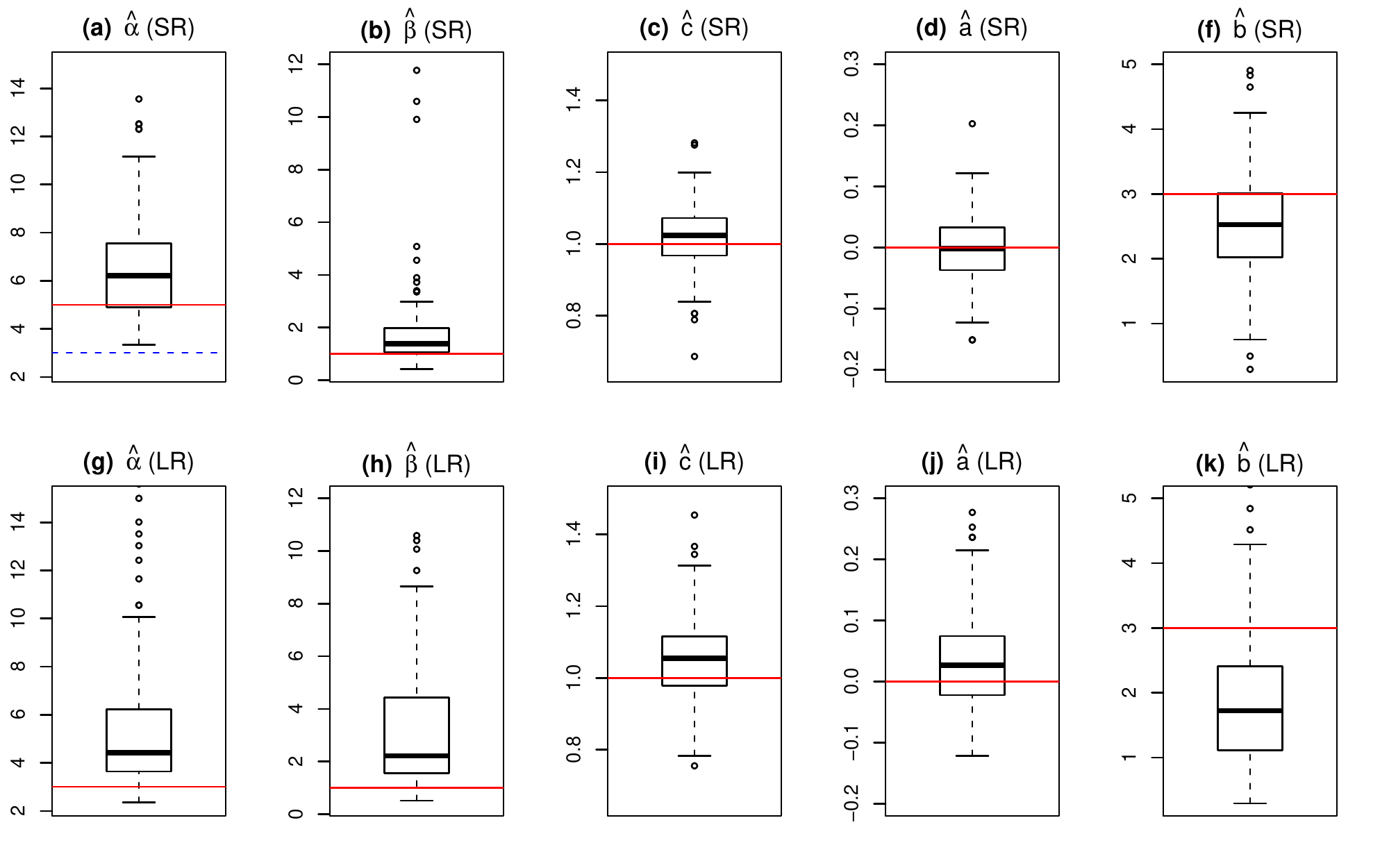}
\end{figure}

\clearpage

\begin{figure}[tbp]
\centering
\caption{Full data: Normal QQ plots of GMM parameter estimates from $100$ simulated data sets. The top row corresponds to the case of short-range dependence ($\alpha = 5$) while the bottom row corresponds to long-range dependence ($\alpha = 3$).}
\label{fig:GMMqq}
\includegraphics[width = 6.6in, height = 3.4in, trim = 0.2in 0.2in 0.2in 0in]{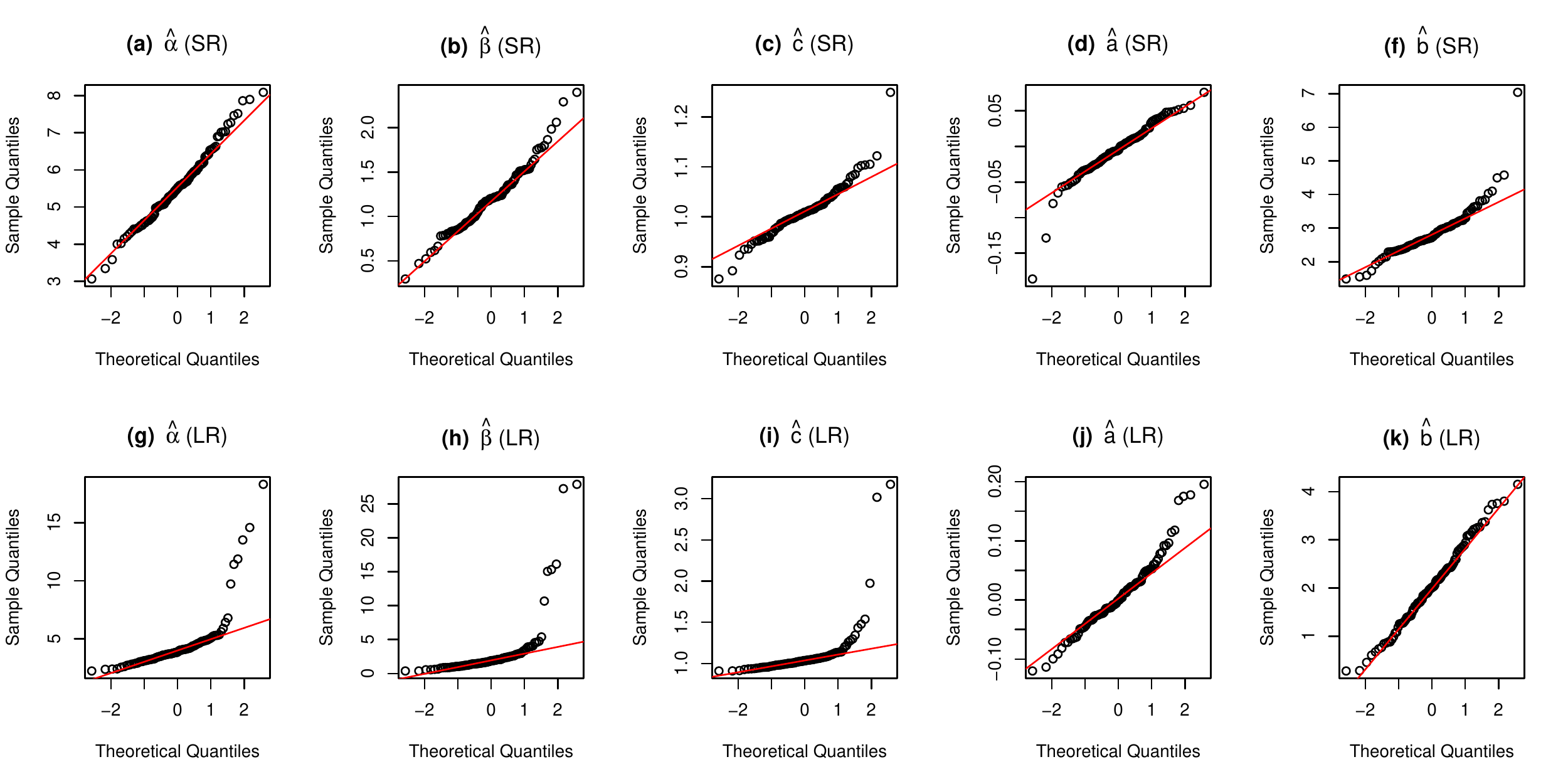}
\end{figure}

are not in the asymptotic regime.
\\
To comment on the possibility of the consistency of our estimators, we repeat our estimation on subsets of our simulated data over the reduced space-time region $[25, 75] \times [50, 100]$. The box plots of the results are shown in Figure \ref{fig:GMMestredout}. As before, a few outliers have been removed to enable us to zoom into majority of the estimates. From the plots, we see that the ranges and bias of the estimates are larger than those for the full data sets. Since the estimates become closer to the true values as more data is included in inference, it seems that consistency does hold for both dependence scenarios. It is also interesting to note that even for the reduced data sets, $\hat{\alpha}$ does not drop below $\alpha = 3$ under short-range dependence.
\\
Next, we look at the normal quantile-quantile (QQ) plots for the full data sets in Figure \ref{fig:GMMqq}. Apart from $\hat{b}$, there are stronger deviations from normality under the long-range dependence than short-range dependence. Just as how the asymptotic distributions for partial sums of one-dimensional transformations of Gaussian processes with finite second moments depend on the Hurst parameter (see Theorem 3.1 of \cite{Beran1994}), it is reasonable to hypothesize that the asymptotic distribution of the sample mean of a MSTOU process will depend on the strength of the dependence.

\section{Conclusion and further work} \label{sec:Conclusion}

The mixed spatio-temporal Ornstein-Uhlenbeck (MSTOU) process is an extension of the STOU process studied in \cite{BS2003} and \cite{NV2016}. While the highlight of this set up is the ability to encompass both short-range and long-range dependence, the MSTOU process also retains the ability to create non-separable spatio-temporal covariances and flexible spatial covariances. This was illustrated for an isotropic class of MSTOU processes, known as the $g$-class, in Section \ref{sec:gclass}. 
\\
After developing the theory for MSTOU processes in Sections \ref{sec:Prelim} and \ref{sec:Prop}, we presented a simulation algorithm in Section \ref{sec:Sim}. Unlike the discrete convolution algorithms for STOU process in \cite{NV2016}, our algorithm does not suffer from the kernel discretisation errors. Instead, the simulation error depends on the kernel truncation: for $g$-class processes with compound Poisson L\'evy bases, an upper bound for the mean squared error was shown to shrink to zero as the simulation padding extents increase to infinity. As already mentioned, it will be useful to better understand the implications of approximating other L\'evy bases using our simulation algorithm.
\\
Since we derived the stationarity and second order moments of our processes in Section \ref{sec:Prop}, we applied the two-step iterated generalised method of moments (GMM) to an MSTOU process in Section \ref{sec:Infer}. Promising results were obtained from the simulation experiments. These support the view that while consistency of the estimators may hold, asymptotic normality may or may not hold depending on the strength of the dependence. More work needs to be done in order to formally establish these asymptotic properties. In particular, it would be useful to determine the asymptotic distributions of the sample averages of MSTOU processes. 
\\
So far, we have focused mostly on isotropic MSTOU processes. Extending our results to anisotropy via geometric or coordinate-wise means is an interesting direction for further research. While the former assumes isotropy for transformed space-time coordinates, the latter assumes isotropy in individual spatial directions only. 

\section*{Appendix:} 

\begin{proof}[Proof of Corollary \ref{cor:intcon}]
This is an extension of the proof of existence for canonical STOU processes on pages 3-4 of the supplementary material of \cite{NV2016}. 
\end{proof}

\begin{proof}[Proof of Theorem \ref{thm:GCF}]
This follows the proofs for Proposition 1 and 5 in \cite{BBV2012} with $h_{A}$ being defined differently to account for space-time, the $\lambda$ parameter space and the definition of an MSTOU process. Based on our assumptions and Fubini's theorem:
\begin{equation*}
v(Y) =  \int_{\mathbb{R}^{d}\times\mathbb{R}} Y_{t}(\mathbf{x}) v(\mathrm{d}\mathbf{x}, \mathrm{d}t) = \int_{S} \int_{\mathbb{R}^{d}\times\mathbb{R}} \mathbf{1}_{A}(\bm{\xi} - \mathbf{x}, s-t) \exp(-\lambda(t-s)) v(\mathrm{d}\mathbf{x}, \mathrm{d}t) L(\mathrm{d}\bm{\xi}, \mathrm{d}s, \mathrm{d}\lambda).
\end{equation*}
Using Proposition 2.6 of \cite{RR1989}, we obtain the expression for the CGF of $v(Y)$. 
\end{proof}

\begin{proof}[Proof of Theorem \ref{thm:tsstation}]
This is analogous to the proof of Theorem 3 in \cite{NV2016} with $ h_{A}(\bm{\xi}, s, \lambda)$ replacing $ h_{A}(\bm{\xi}, s)$.
\end{proof}

\begin{proof}[Proof of Corollary \ref{Cor:meancov}]
For information on the bivariate distributions, we use the result in Theorem \ref{thm:GCF} with $v(\mathrm{d}\mathbf{x}, \mathrm{d}t) = \theta_{1}\delta_{t_{1}}(\mathrm{d}t)\delta_{\mathbf{x}_{1}}(\mathrm{d}\mathbf{x}) +  \theta_{2}\delta_{t_{2}}(\mathrm{d}t)\delta_{\mathbf{x}_{2}}(\mathrm{d}\mathbf{x})$ where $(\mathbf{x}_{1}, t_{1})$ and $(\mathbf{x}_{2}, t_{2})$ denote arbitary locations in space-time. We also set $\theta = 1$. With these specifications, we find that:
\begin{equation*}
h_{A}(\bm{\xi} ,s, \lambda) = \int_{\mathbb{R}^{d}\times\mathbb{R}} \mathbf{1}_{A}(\bm{\xi} - \mathbf{x}, s-t) \exp(-\lambda(t-s)) v(\mathrm{d}\mathbf{x}, \mathrm{d}t) = \sum_{i=1}^{2} \theta_{i}\mathbf{1}_{A}(\bm{\xi} - \mathbf{x}_{i}, s-t_{i}) \exp(-\lambda(t_{i}-s)).
\end{equation*}
Since we can obtain the covariance structure by differentiating the bivariate CGF with respect to $\theta_{1}$ and $\theta_{2}$, and setting $\theta_{1} = \theta_{2} = 0$, we are interested in the cross terms. The first term in (\ref{eqn:MSTOULevy}) does not contain any cross terms:
\begin{equation*}
i\theta a\int_{S}  h_{A}(\bm{\xi},s, \lambda)f(\lambda)\mathrm{d}\bm{\xi} \mathrm{d}s\mathrm{d}\lambda =  i a\sum_{i=1}^{2}\theta_{i}\int_{0}^{\infty} \int_{A_{t_{i}}(\mathbf{x}_{i})} \exp(-\lambda(t_{i}-s))\mathrm{d}\bm{\xi}\mathrm{d}s f(\lambda) \mathrm{d}\lambda. 
\end{equation*}
A cross term appears in the second term of (\ref{eqn:MSTOULevy}):
\begin{align*}
-\frac{1}{2}\theta^{2}b\int_{S}  h^{2}_{A}(\bm{\xi},s, \lambda)f(\lambda)\mathrm{d}\bm{\xi}\mathrm{d}s\mathrm{d}\lambda &=  -\frac{1}{2}b\left[\sum_{i=1}^{2} \theta_{i}^{2} \int_{0}^{\infty} \int_{A_{t_{i}}(\mathbf{x}_{i})} \exp(-2\lambda(t_{i}-s))\mathrm{d}\bm{\xi}\mathrm{d}s f(\lambda) \mathrm{d}\lambda \right. \\
&\left.+  2\theta_{1}\theta_{2}\int_{0}^{\infty} \int_{A_{t_{1}}(\mathbf{x}_{1})\cap A_{t_{2}}(\mathbf{x}_{2})} \exp(-\lambda(t_{1}+t_{2}-2s))\mathrm{d}\bm{\xi}\mathrm{d}s f(\lambda) \mathrm{d}\lambda \right],
\end{align*}
Differentiating the cross term respect to $\theta_{1}$ and $\theta_{2}$, and setting $\theta_{1} = \theta_{2} = 0$, we have:
\begin{equation}
-b\int_{0}^{\infty} \int_{A_{t_{1}}(\mathbf{x}_{1})\cap A_{t_{2}}(\mathbf{x}_{2})} \exp(-\lambda(t_{1}+t_{2}-2s))\mathrm{d}\bm{\xi}\mathrm{d}s f(\lambda) \mathrm{d}\lambda. \label{eqn:con2t} 
\end{equation}
By splitting the integration regions into $A_{t_{1}}(\mathbf{x}_{1})\backslash A_{t_{2}}(\mathbf{x}_{2})$,  $A_{t_{2}}(\mathbf{x}_{2})\backslash A_{t_{1}}(\mathbf{x}_{1})$ and $A_{t_{1}}(\mathbf{x}_{1})\cap A_{t_{2}}(\mathbf{x}_{2})$, we can express the last term in (\ref{eqn:MSTOULevy}) as: 
\begin{align}
&\int_{S} \int_{\mathbb{R}} \left(\exp(i\theta h_{A}(\bm{\xi},s, \lambda)z) - 1 - i\theta h_{A}(\bm{\xi},s, \lambda)z\mathbf{1}_{|z|\leq 1}\right) \nu(\mathrm{d}z)  f(\lambda)\mathrm{d}\bm{\xi}\mathrm{d}s\mathrm{d}\lambda \nonumber \\ 
&= \int_{0}^{\infty}\int_{\mathbb{R}}\int_{A_{t_{1}}(\mathbf{x}_{1})\backslash A_{t_{2}}(\mathbf{x}_{2})}  \left(\exp(i\theta_{1}\exp(-\lambda(t_{1}-s))z) - 1 - i\theta_{1}\exp(-\lambda(t_{1}-s)) z\mathbf{1}_{|z|\leq 1}\right)  \mathrm{d}\bm{\xi}\mathrm{d}s\nu(\mathrm{d}z) f(\lambda)\mathrm{d}\lambda \nonumber \\
&+  \int_{0}^{\infty}\int_{\mathbb{R}}\int_{A_{t_{2}}(\mathbf{x}_{2})\backslash A_{t_{2}}(\mathbf{x}_{1})}  \left(\exp(i\theta_{2}\exp(-\lambda(t_{2}-s))z) - 1 - i\theta_{2}\exp(-\lambda(t_{2}-s)) z\mathbf{1}_{|z|\leq 1}\right)  \mathrm{d}\bm{\xi}\mathrm{d}s\nu(\mathrm{d}z) f(\lambda)\mathrm{d}\lambda \nonumber \\
&+  \int_{0}^{\infty}\int_{\mathbb{R}}\int_{A_{t_{1}}(\mathbf{x}_{1})\cap A_{t_{2}}(\mathbf{x}_{2})}  \left(\exp\left(i\left[\theta_{1}\exp(-\lambda(t_{1}-s)) + \theta_{2}\exp(-\lambda(t_{2}-s))\right]z\right) - 1 \right. \nonumber \\
&\left. - i\left[\theta_{1}\exp(-\lambda(t_{1}-s)) + \theta_{2}\exp(-\lambda(t_{2}-s))\right] z\mathbf{1}_{|z|\leq 1}\right)  \mathrm{d}\bm{\xi}\mathrm{d}s\nu(\mathrm{d}z) f(\lambda)\mathrm{d}\lambda. \label{eqn:lastterm}
\end{align}
When we differentiate with respect to $\theta_{1}$ and $\theta_{2}$, set $\theta_{1} = \theta_{2} = 0$, the first two terms of (\ref{eqn:lastterm}) equal to zero. The same procedure on the last term gives:
\begin{align*}
- \int_{0}^{\infty}\int_{\mathbb{R}}\int_{A_{t_{1}}(\mathbf{x}_{1})\cap A_{t_{2}}(\mathbf{x}_{2})} z^{2}\exp(-\lambda(t_{1} + t_{2} - 2s)) \mathrm{d}\bm{\xi}\mathrm{d}s\nu(\mathrm{d}z) f(\lambda)\mathrm{d}\lambda. 
\end{align*}
The required expression of the covariance function is obtained by adding this to (\ref{eqn:con2t}) and multiplying the result by $-1$. To obtain the expression for the mean of $Y$, we differentiate each of the term in (\ref{eqn:MSTOULevy}) by either $\theta_{1}$ or $\theta_{2}$ and set $\theta_{1} = \theta_{2} = 0$. The mean is then given by multiplying the result by $-i$. 
\end{proof}

\begin{proof}[Proof of Corollary \ref{cor:gcon}]
The conditions follow from (\ref{eqn:simicon}) since the temporal cross-section of $A_{t}(\mathbf{x})$ corresponds to the $d$-dimensional sphere with centre $(\mathbf{x}, s)$ for $s\leq t$ and radius $g(|t-s|)$. 
\end{proof}

\begin{proof}[Proof of Theorem \ref{thm:giso}]
Fix $t\in\mathbb{R}$. From (\ref{eqn:Ycov}), the spatial covariance of $Y$ is given by:
\begin{align*}
\Cov(Y_{t}(\mathbf{x}), Y_{t}(\mathbf{x}+d_{\mathbf{x}})) &= \Var(L') \int_{0}^{\infty} \int_{A_{t}(\mathbf{x})\cap A_{t}(\mathbf{x} + d_{\mathbf{x}})} \exp(-2\lambda(t-s))\mathrm{d}\bm{\xi}\mathrm{d}s f(\lambda) \mathrm{d}\lambda,
\end{align*}
where $d_{\mathbf{x}}\in\mathbb{R}^{d}$ denotes the spatial displacement vector while $L'$ denotes the L\'evy seed of $Y$. \\
Suppose first that $d = 1$, i.e.~we have one dimensional space. Without loss of generality, let $d_{x} \geq 0$, then:
\begin{align}
\Cov(Y_{t}(x), Y_{t}(x+d_{x})) &= \Var(L') \int_{0}^{\infty} \int_{\infty}^{t - g^{-1}(|d_{x}|/2)}\int^{x+g(|t-s|)}_{x+ d_{x} - g(|t-s|)} \exp(-2\lambda(t-s))\mathrm{d}\bm{\xi}\mathrm{d}s f(\lambda) \mathrm{d}\lambda \nonumber \\
&= \Var(L') \int_{0}^{\infty} \int_{\infty}^{t - g^{-1}(|d_{x}|/2)}(2g(|t-s|)- |d_{x}|) \exp(-2\lambda(t-s))\mathrm{d}s f(\lambda) \mathrm{d}\lambda \label{eqn:scovg} \\
&=  \Var(L') \int_{0}^{\infty} \int_{g^{-1}(|d_{x}|/2)}^{\infty}(2g(w)- |d_{x}|) \exp(-2\lambda w)\mathrm{d}w f(\lambda) \mathrm{d}\lambda, \nonumber
\end{align}
where $w = t-s$. Note that $t - g^{-1}(|d_{x}|/2)$ denotes the largest temporal coordinate of $A_{t}(\mathbf{x})\cap A_{t}(\mathbf{x} + d_{\mathbf{x}})$ since $A_{t}(\mathbf{x})$ is radially symmetric and translation invariant. Since the spatial covariance of $Y$ is a function of the spatial distance $|d_{x}|$, $Y$ is isotropic in space.
\\
For general $d\in\mathbb{N}$, replace $(2g(|t-s|)- d_{x})$ in (\ref{eqn:scovg}) with the volume of the intersection of two $d$-spheres with the same radius $g(|t-s|)$ and centres at $\mathbf{x}$ and $\mathbf{x} + d_{\mathbf{x}}\in\mathbb{R}^{d}$. This can be written as the volume of two identical spherical caps \cite[]{Li2011}:
\begin{equation*}
\frac{\pi^{(d-1)/2}}{\Gamma\left(\frac{d-1}{2} + 1\right)} g^{d}(|t-s|)B\left(1 - \left(\frac{|d_{\mathbf{x}}|}{2g(|t-s|)}\right)^{2}; \frac{d+1}{2}, \frac{1}{2}\right),
\end{equation*} 
where $B$ denotes the incomplete beta function. Since this quantity is a function of $|d_{\mathbf{x}}|$, $Y$ is isotropic in space for general $d\in\mathbb{N}$.
\end{proof}

\begin{proof}[Proof of Theorem \ref{thm:CARp}]
From Remark 2 of \cite{BDY2012}, the eigenvectors of $A$ are $\mathbf{v}_{i} = (1, \eta_{i}, \eta_{i}^{2} \dots, \eta_{i}^{p-1})^{T}$ for $i = 1, \dots, p$. With $V = (v_{1} \dots v_{p})$, we can write:
\begin{equation}
\exp\big(A(t-s)\big)\mathbf{e}_{p} = V\begin{pmatrix} \exp(\eta_{1}(t-s)) & 0 & \mathbf{0} \\ \vdots & \ddots & \vdots \\ \mathbf{0} & 0 & \exp(\eta_{p}(t-s))\end{pmatrix}V^{-1}\mathbf{e}_{p}. \label{eqn:interCAR}
\end{equation}
Since $V$ is the transpose of a Vandermonde matrix, the term ``$V^{-1}\mathbf{e}_{p}$'' which corresponds to the last column of $V^{-1}$ now corresponds to the last row of the Vandermonde matrix inverse. We obtain the required result by using the formulae for these matrix entries in Exercise 40 in Section 1.2.3 of \cite{Knuth1997} and substituting the corresponding expression for (\ref{eqn:interCAR}) in the definition of $Y_{t}(\mathbf{x})$.  
\end{proof}

\begin{proof}[Proof of Theorem \ref{thm:MSE}]
By bounding the MSE by that for the boundaries of our simulation domain, we have:
\begin{align*}
\mathbb{E}\left[\left(Y_{t}(\mathbf{x}) - Z_{t}(\mathbf{x})\right)^{2}\right] &\leq \mathbb{E}\left[\left(\int_{0}^{\infty}\int_{A_{t}(\mathbf{x})\backslash [\mathbf{x} - X_{pad}, \mathbf{x} + X_{pad}]\times[t-T_{pad}, t]} e^{-\lambda(t-s)} L(\mathrm{d}\bm{\xi}, \mathrm{d}s, \mathrm{d}\lambda)\right)^{2}\right] \\
&= \left(\Var(L') + \mathbb{E}\left[L'\right]^{2}\right) \int_{0}^{\infty}\int_{A_{t}(x)\backslash  [\mathbf{x} - X_{pad}, \mathbf{x} + X_{pad}]\times[t-T_{pad}, t]} e^{-2\lambda(t-s)} f(\lambda) \mathrm{d}\bm{\xi}\mathrm{d}s\mathrm{d}\lambda \\
&\leq \frac{\pi^{d/2}\left(\Var(L') + \mathbb{E}\left[L'\right]^{2}\right)}{\Gamma\left(\frac{d}{2} + 1\right)} \int_{0}^{\infty}\left(\int_{\min(T_{pad}, g^{-1}(X_{pad}))}^{\infty}  g^{d}(w)e^{-2\lambda w}\mathrm{d}w\right) f(\lambda) \mathrm{d}\lambda, 
\end{align*}
where $w = t-s$, $ [\mathbf{x} - X_{pad}, \mathbf{x} + X_{pad}] =  [x_{1} - X_{pad}, x_{1} + X_{pad}]\times \dots \times  [x_{d} - X_{pad}, x_{d} + X_{pad}]$ and we have used the fact that the temporal cross-section of the ambit set is the $d$-dimensional sphere centred at $\mathbf{x}$ with radius $g(|t-s|)$.  
\end{proof}

\begin{proof}[Proof of Theorem \ref{thm:identifiability}]
This is similar to the arguments used to establish identifiability of the GMM estimator for the supOU process in Proposition 3.3 of \cite{STW2015}. When $m\geq 2$, we can use the temporal correlations at two different time lags to identify $\alpha$ and $\beta$ uniquely. These can then be used to determine $c$ from the spatial correlation. Lastly, $\mathbb{E}\left[L'\right]$ and $\Var\left(L'\right)$ can be found through the mean and variance.
\end{proof}

\section*{Acknowledgements}

M. Nguyen is grateful to Imperial College for her PhD scholarship which supported this research. A.E.D. Veraart acknowledges financial support by a Marie Curie FP7 Integration Grant (grant agreement number PCIG11-GA-2012-321707) within the 7th European Union Framework Programme.

\bibliographystyle{agsm} 
\bibliography{refsMSTOU}

\vspace{2mm}
Michele Nguyen, Department of Mathematics, Imperial College London, 180 Queen's Gate, SW7 2AZ London, UK. \\
Email: michele.nguyen09@imperial.ac.uk

\end{document}